\definecolor{bluestate}{RGB}{135,206,250}
\def\A{\mathcal{A}}
\def\ns{\star}
\def\prod{\mathbf{A}}
\def\t{\mathbf{t}}
\def\s{\mathbf{s}}
\def\U{\mathcal{U}}
\newcommand{\inp}[1]{{}^\bullet\!{#1}}
\newcommand{\out}[1]{{#1}^\bullet}
\def\N{\mathcal{N}}
\def\S{\mathcal{S}}
\def\co{co}
\def\tco{coc}
\def\conf{\mathrel{\#}}
\def\eqrel{\equiv}
\def\scause{\ll}
\newcommand{\st}[1]{\mathbf{St}(#1)}
\newcommand{\past}[1]{[#1]}
\newcommand{\ma}[1]{M(#1)}
\def\ocseq{w}
\newcommand{\Tr}[1]{{\it Tr}(#1)}
\newcommand{\Coni}[2]{{\it Ico}_{#1}(#2)}
\renewcommand{\S}{{\cal S}}
\def\R{\mathbb{R}_+}
\def\trace{tr}
\def\blocks{\vdash_{\!\N}}
\def\rblocks{\mathord{\blocks}}
\def\ind#1{\mathit{ind}(#1)}
\def\parb#1{\mathit{par}(#1)}
\def\worklist{\mathit{worklist}}
\def\by#1{\mathop{{\hbox{\setbox0=\hbox{$\scriptstyle{#1\quad}$}{$\buildrel{\quad\scriptstyle{#1}\quad}\over{\hbox to \wd0{\rightarrowfill}}$}}}}}
\title{Computation of summaries using net unfoldings}
\author{Javier Esparza\inst{1}, Lo\"ig Jezequel\inst{2}, and Stefan Schwoon\inst{3}}
\institute{Institut f\"ur Informatik, Technische Universit\"at M\"unchen, Germany
\and
ENS Cachan Bretagne, Rennes, France
\and
LSV, ENS Cachan \& CNRS, INRIA Saclay, France}
\begin{document}

\maketitle

\begin{abstract}
We study the following {\em summarization problem}: given a parallel
composition $\prod = \A_1 \parallel \ldots \parallel \A_n$ of labelled transition systems
communicating with the environment through a distinguished component $\A_i$, 
efficiently compute a {\em summary } $\S_i$ such that $\mathbf{E} \parallel \prod$ and $\mathbf{E} \parallel \S_i$ are trace-equivalent for every environment $\mathbf{E}$. While $\S_i$ 
can be computed using elementary automata theory, the resulting algorithm suffers from
the state-explosion problem. We present a new, simple but subtle algorithm based on net unfoldings, 
a partial-order semantics, give some experimental results using an implementation on top of \textsc{Mole}, and show that our algorithm can handle {\em divergences} and compute {\em weighted summaries} with minor modifications.

\end{abstract}

\section{Introduction}

We address a fundamental problem in automatic compositional verification. Consider a 
parallel composition $\prod = \A_1 \parallel \ldots \parallel \A_n$ of processes,
modelled as labelled transition systems, which is itself part of a larger system $\mathbf{E} \parallel \prod$ for some environment $\mathbf{E}$.
Assume that $\A_i$ is the interface of $\prod$ with the environment, i.e., 
$\prod$ communicates with the outer world only through actions 
of $\A_i$. The task consists in computing a new interface 
$\S_i$ with the same set
of actions as $\A_i$ such that $\mathbf{E} \parallel \prod$ and $\mathbf{E} \parallel \S_i$ 
have the same behaviour. In other words, the environment $E$ cannot distinguish between 
$\prod$ and $\S_i$. Since
$\S_i$ usually has a much smaller state space than $\prod$ (making
$\mathbf{E} \parallel \prod$ easier to analyse) we call it a {\em summary}.

We study the problem in a CSP-like setting \cite{Hoare85}:
parallel composition is by rendez-vous, and the behaviour of a transition
system is given by its trace semantics. 

It is easy to compute $\S_i$ using elementary automata theory:
we first compute the transition system of $\prod$, whose states  
are tuples $(s_1, \ldots, s_n)$, where $s_i$ is a state of $\A_i$. Then we
hide all actions except those of the interface, i.e., we 
replace them by $\varepsilon$-transitions ($\tau$-transitions in CSP terminology). 
We can then eliminate all $\varepsilon$-transitions using standard algorithms, and, if
desired, compute the minimal summary by applying e.g. Hopcroft's algorithm.
The problem of this approach is the state-space explosion: the number of states of 
$\prod$ can grow exponentially in the number of sequential components. While this
is unavoidable in the worst case (deciding whether $\S_i$ has an empty set of traces 
is a PSPACE-complete problem, and the minimal summary $\S_i$ may be exponentially larger than 
$\A_1, \ldots, \A_n$ in the worst case, see e.g. \cite{HarelKV97}) the combinatorial explosion happens 
already in trivial cases: if the components $\A_1, \ldots, \A_n$ do not communicate at all, 
we can obviously take $\S_i=\A_i$, but the algorithm we have just 
described will need 
exponential time and space.

We present a technique to palliate this problem based on 
net unfoldings (see e.g. \cite{Esparza08}). Net unfoldings are a partial-order semantics for 
concurrent systems, closely related to event structures~\cite{Winskel11}, that provides very 
compact representations of the state space for systems with a high degree of concurrency. 
Intuitively, an unfolding is the extension to parallel compositions
of the notion of unfolding a transition system into a tree. The unfolding
is usually infinite. We show how to algorithmically construct a finite 
prefix of it from which the summary can be easily extracted. The algorithm 
can be easily implemented re-using many 
components of existing unfolders like \textsc{Punf}~\cite{Punf}
and \textsc{Mole}~\cite{Mole}.
However, its correctness proof is surprisingly subtle. This proof is 
the main contribution of the paper. 
However, we also evaluate the algorithm on some classical benchmarks \cite{Cor96}.
We then show that -- with minor modifications -- the algorithm can be extended so that the summary obtained contains information about the possible divergences, that is whether or not after a given finite trace of the interface $\A_i$ it is possible that $\prod$ evolves silently forever (i.e. without using any action of $\A_i$).
And finally, we show how to extend the algorithm to deal with weighted systems: $\S_i$ then also gives for each of its finite traces the minimum cost in $\prod$ to execute this trace.

{\bf Related work.} The summarization problem has been extensively studied in an 
interleaving setting (see e.g. \cite{GrafS90,Valmari96,zaraket2005scalable}), in which one 
first constructs the transition system of $\prod$ and then reduces it. 
We study it in a partial-order setting.

Net unfoldings, and in general partial-order semantics, have been used to solve many analysis problems: deadlock \cite{Mcmillan95,Khomenko00}, 
reachability and model-checking questions \cite{Esparza96,Couvreur00,Khomenko03b,Esparza08,Baldan12}, diagnosis 
\cite{Fabre05}, and other specific applications \cite{Khomenko06,Hickmott07}. To the best of our knowledge 
we are the first to explicitly study the summarization problem. 

Our problem can be solved with the help of Zielonka's algorithm \cite{Zielonka87,Mukund94,Genest10}, which yields 
an asynchronous automaton trace-equivalent to $\prod$. The projection of this automaton onto the alphabet of $\A_i$ 
yields a summary $\S_i$. However, Zielonka's algorithm is notoriously complicated and, contrary to our algorithm, 
requires to store much additional information for each event \cite{Mukund94}. 
In \cite{Fabre09},  the complete tuple $\S_1, \ldots, \S_n$ is computed -- possibly in a weighted context -- with an iterative message-passing algorithm that 
transfers information between components until a fixed point is reached. However, termination
is only guaranteed when the communication graph is acyclic.

This paper extends~\cite{Esparza13} with proofs and implementation details.

\section{Preliminaries}

\subsection{Transition systems}

A \emph{labelled transition system} (LTS) is a tuple $\A=(\Sigma,S,T,\lambda,s^0)$ where $\Sigma$ 
is a set of \emph{actions}, $S$ is a set of \emph{states}, 
$T \subseteq S \times S$ is a set of \emph{transitions}, $\lambda \colon T \rightarrow \Sigma$
is a {\em labelling function}, and $s^0\in S$ is an \emph{initial state}. 
An $a$-transition
is a transition labelled by $a$. 
We use this definition -- excluding the possibility to have two transitions with different labels between the same pair of states -- for simplicity. 
However, the results presented in this paper would still hold if this possibility was not excluded.
A sequence of transitions $\tau = t_1t_2t_3\dots \in T^*\cup T^\omega$ is an \emph{execution} of $\A$ if there
is a sequence $s_0s_1s_2\dots$ of states such that $t_k=(s_{k-1},s_k)$ 
for every $k$. We write $s_0 \by{\tau}$ (or $s_0\by{\tau} s_n$ when $\tau$ is finite with $t_n$ as last transition). An execution 
is a \emph{history} if $s_0=s^0$. 
A sequence $\sigma=a_1a_2a_3 \ldots \in\Sigma^*\cup\Sigma^\omega$ 
of actions is a \emph{computation} if there is an execution $\tau = t_1t_2t_3\dots$
such that $\lambda(\tau)=\lambda(t_1)\lambda(t_2)\lambda(t_3)\ldots = \sigma$; if $s_0 \by{\tau}$, 
then we also write $s_0 \by{\sigma}$. 
It is a \emph{trace} if{}f there exists such $\tau$ which is an history.
We call $\tau$ a \emph{realization} of $\sigma$.
Abusing language, given an execution $\tau=t_1t_2t_3 \ldots $, we denote by $\trace(\tau)$ the 
computation $\lambda(t_1)\lambda(t_2)\lambda(t_3) \ldots$ (even if it is not necessarily a trace).
The set of traces 
of $\A$ is denoted by $\Tr{\A}$.
Figure~\ref{fig:transitionsystems} shows (on its left) three transition systems. 

\begin{figure}[htbp]
\centering
\includegraphics[scale=0.7]{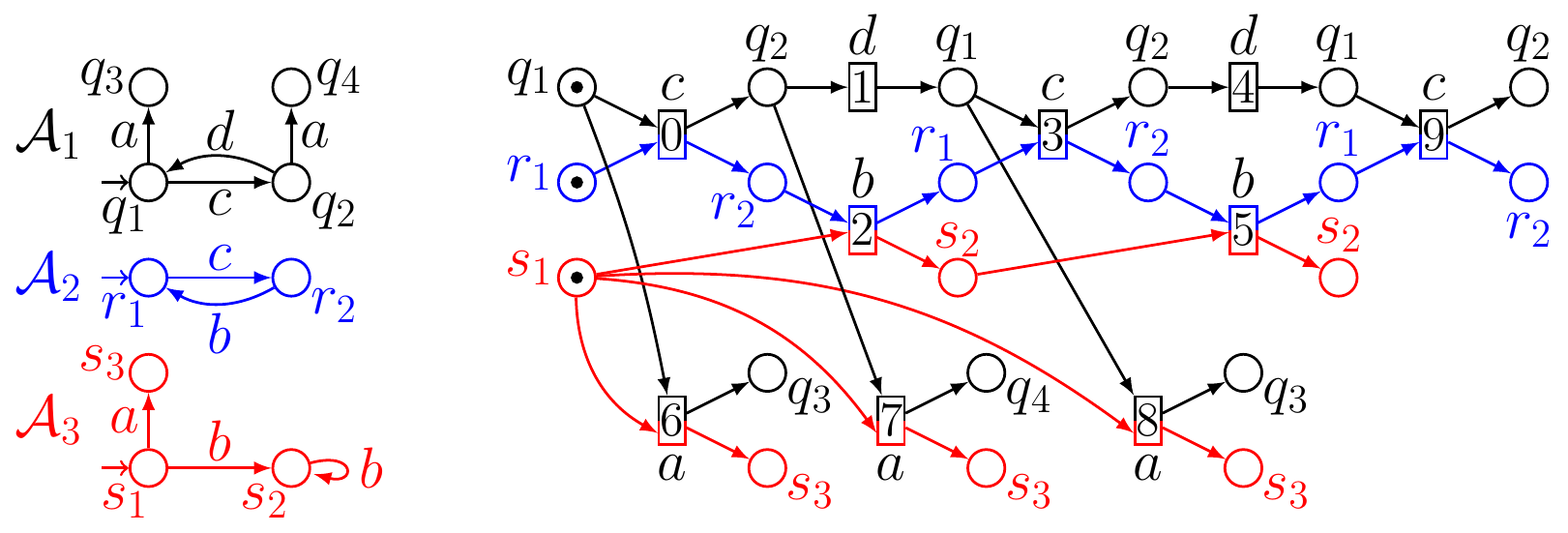}
\caption{Three labeled transition systems (left) and a branching process (right)}\label{fig:transitionsystems}\label{fig:petrinet}\label{fig:unfolding}
\end{figure}

Let $\A_1,\dots,\A_n$ be LTSs where $\A_i=(\Sigma_i,S_i,T_i,\lambda_i,s^0_i)$. 
The {\em parallel composition}  $\prod=\A_1 \parallel \ldots \parallel \A_n$ is the LTS 
defined as follows. The set of actions
is $\mathbf{\Sigma} =\Sigma_1 \cup \ldots \cup \Sigma_n$. The states, called \emph{global states}, are the tuples  
$\s=(s_1,\dots,s_n)$ such that $s_i\in S_i$ for every $i\in\{1..n\}$. 
The \emph{initial global state} is $\s^0=(s_1^0,\dots,s_n^0)$. The transitions, called {\em global transitions},
are the tuples $\t=(t_1, \ldots, t_n)  \in (T_1\cup\{\ns\})\times\dots\times(T_n\cup\{\ns\})\setminus\{(\ns,\dots,\ns)\}$ such that there is an action $a \in \mathbf{\Sigma}$ satisfying 
 for every $i\in\{1..n\}$: if $a \in \Sigma_i$, then $t_i$ is an $a$-transition of $T_i$,
otherwise $t_i = \ns$; the label of $\t$ is the action $a$. 
If $t_i\neq\ns$ we say that $\A_i$ \emph{participates} in $\t$. 
It is easy to see that $\sigma \in \mathbf{\Sigma}^*\cup\mathbf{\Sigma}^\omega$ is a trace of $\prod$ if{}f for every $i \in \{1..n\}$ the projection of $\sigma$ on $\Sigma_i$, denoted by $\sigma_{|\Sigma_i}$, is a trace of $\A_i$.

\subsection{Petri nets}

A \emph{labelled net} is a tuple $(\Sigma, P,T,F,\lambda)$ where $\Sigma$ is a set of {\em actions}, $P$ and $T$ are 
disjoint sets of \emph{places} and \emph{transitions} (jointly called \emph{nodes}), $F\subseteq (P\times T)\cup (T\times P)$ is a set
of {\em arcs}, and $\lambda \colon P \cup T \rightarrow \Sigma$ is a {\em labelling
function}. For $x\in P\cup T$ we denote by $\inp{x}=\{\,y\mid(y,x)\in F\,\}$ and $\out{x}=\{\,y\mid(x,y)\in F\,\}$ the sets of \emph{inputs} and \emph{outputs} of $x$, respectively. A set $M$ of places is called a \emph{marking}.  
A \emph{labelled Petri net} is a tuple $\N=(\Sigma,P,T,F,\lambda,M_0)$ where $(\Sigma,P,T,F,\lambda)$ is a labelled net and $M_0\subseteq P$ is the \emph{initial marking}.
A marking $M$ \emph{enables} a transition $t\in T$ if $\inp{t}\subseteq M$.
In this case $t$ can \emph{occur} or \emph{fire}, leading to the new marking $M'=(M\setminus\inp{t})\cup\out{t}$.
An \emph{occurrence sequence} is a (finite or infinite) sequence of transitions that can occur from $M_0$ in the 
order specified by the sequence. A \emph{trace} is the sequence of labels of an occurrence 
sequence. The set of traces of $\N$ is denoted by $\Tr{\N}$.

\subsection{Branching processes}

The finite \emph{branching processes} of $\prod=\A_1~\parallel~\ldots~\parallel~\A_n$
are labelled Petri nets whose places are labelled with states of $\A_1,\dots,\A_n$, and 
whose transitions are labelled with global transitions of $\prod$. Following tradition, 
we call the places and transitions of these nets {\em conditions} and {\em events}, respectively.
 (Since global transitions are labelled with 
actions, each event is also implicitly labelled with an action.)
We say that a marking $M$ of these nets {\em enables} a global transition 
$\t$ of $\prod$ if for every state $s \in \inp{\t}$ some condition of $M$ is labelled by $s$. 
The set of {\em finite branching processes} of $\prod$ is defined inductively as follows:
\begin{enumerate}
\item A labelled Petri net with conditions $b_1^0, ..., b_n^0$ labelled by $s_1^0, \ldots, s_n^0$, 
no events, and  
with initial marking $\{b_1^0, ..., b_n^0\}$,
is a branching process of $\prod$.
\item Let $\N$ be a branching process of $\prod$ such that some reachable marking of $\N$ enables some  
global transition $\t$. Let $M$ be the subset of conditions of the marking labelled by $\inp{\t}$. 
If $\N$ has no event labelled by $\t$ with $M$ as input set, then the Petri net
obtained by adding to $\N$: a new event $e$, labelled by $\t$; a new condition for every state $s$ of 
$\out{\t}$, labelled by $s$; new arcs leading from each condition of $M$ to $e$, and from $e$ to each 
of the new conditions, is also a branching process of $\prod$. 
\end{enumerate}
Figure~\ref{fig:unfolding} shows on the right a branching process 
of the parallel composition of the LTSs on the left. Events are labelled with their corresponding actions.

The set of all branching processes of a net, finite and infinite, is defined by
closing the finite branching processes under countable unions (after a suitable renaming of conditions and events) \cite{Esparza08}.
In particular, the union of 
all finite branching processes yields the {\em unfolding} of the net, which 
intuitively corresponds to the result of exhaustively adding all extensions in the definition above. 

A {\em trace} of a branching process $\N$ is the sequence of action labels of an occurrence sequence of events of $\N$. In Figure~\ref{fig:unfolding},
firing the events on the top half of the process yields any of the traces 
$cbdcbd$, $cdbcbd$, $cbdcdb$, or $cdbcdb$.
The sets of traces of $\prod$
and of its unfolding coincide.

Let $x,y$ be nodes of a branching process. We say that $x$ is 
a \emph{causal predecessor} of $y$, denoted by $x < y$, if there is a non-empty path 
of arcs from $x$ to $y$; further,
$x\leq y$ denotes that either $x<y$ or $x=y$. If $x\leq y$ or $x\geq y$, then $x$ and 
$y$ are \emph{causally related}.
We say that $x$ and $y$ are \emph{in conflict}, denoted by $x \conf y$, if there is a 
condition $z$ (different from $x$ and $y$) from which one can reach both $x$ and $y$, 
exiting $z$ by different arcs.
Finally, $x$ and $y$ are \emph{concurrent} if they are neither causally related nor 
in conflict.

A set of events $E$ is a \emph{configuration} if it is \emph{causally closed} (that is, 
if $e\in E$ and $e'<e$ then $e'\in E$) and \emph{conflict-free} (that is, for every 
$e,e'\in E$, $e$ and $e'$ are not in conflict). 
The \emph{past} of an event $e$, denoted by $\past{e}$, is the set of events $e'$ such 
that $e'\leq e$ (so it is a configuration). For any event $e$,  we denote by $\ma{e}$ the unique marking reached by 
any occurrence sequence that fires exactly the events of $\past{e}$. 
Notice that, for each component $\A_i$ of $\prod$, $\ma{e}$ contains exactly one condition 
labelled by a state of $\A_i$. We denote this condition by $\ma{e}_i$.
We write $\st{e}=\{\,\lambda(x)\mid x\in \ma{e}\,\}$ and call it the 
\emph{global state reached by $e$}.

\section{The Summary Problem}

Let $\prod=\A_1 \parallel \dots \parallel \A_n$ be a parallel composition with a distinguished
component $\A_i$, called the {\em interface}. An {\em environment} of $\prod$ is any LTS
$\mathbf{E}$ (possibly a parallel composition) that only communicates with $\prod$ through the interface,
i.e, $\Sigma_\mathbf{E} \cap (\Sigma_1 \cup \ldots \cup \Sigma_n) = \Sigma_\mathbf{E} \cap \Sigma_i$. We wish to
compute a {\em summary} $\S_i$, i.e., an LTS with the same actions as $\A_i$ such that 
$\Tr{\mathbf{E} \parallel \prod}|_{\Sigma_\mathbf{E}} = \Tr{\mathbf{E} \parallel \S_i}|_{\Sigma_\mathbf{E}}$ for every 
environment $\mathbf{E}$, where $X|_{\Sigma}$ denotes the projection of the traces of $X$ onto 
$\Sigma$. It is well known (and follows easily from the definitions) that this holds if{}f 
$\Tr{\S_i} = \Tr{\prod}|_{\Sigma_i}$ \cite{Hoare85}. We therefore address the following problem:

\begin{definition}[Summary problem]
Given LTSs $\A_1, \ldots, \A_n$ with interface $\A_i$,
compute an LTS $\S_i$ satisfying $\Tr{\S_i} = \Tr{\prod}|_{\Sigma_i}$, 
where $\prod = \A_1 \parallel \cdots \parallel \A_n$. 
\end{definition}

The problem can be solved by computing the LTS $\prod$, but the size of $\prod$ 
can be exponential in $\A_1, \ldots, \A_n$. So we investigate an unfolding approach. 

The {\em interface projection} $\N_i$ of a branching process $\N$ of $\prod$ onto 
$\A_i$ is the following labelled subnet of $\N$:
(1) the conditions of $\N_i$ are the conditions of $\N$ with labels in $S_i$;
(2) the events of $\N_i$ are the events of $\N$  where $\A_i$ participates;
(3) $(x,y)$ is an arc of $\N_i$ if{}f it is an arc of $\N$ and $(x,y)$ are nodes of $\N_i$.
Obviously, every event of $\N_i$ has exactly one
input and one output condition, and $\N_i$ can therefore be seen as an LTS; thus, we sometimes
speak of the LTS $\N_i$. The interface projection $\N_1$ for the branching process of 
Figure \ref{fig:unfolding} is the subnet given by the black conditions and their input and output 
events, and its LTS representation is shown in the left of Figure \ref{fig:projection}.

\begin{figure}[htbp]
\centering
\includegraphics[scale=0.55]{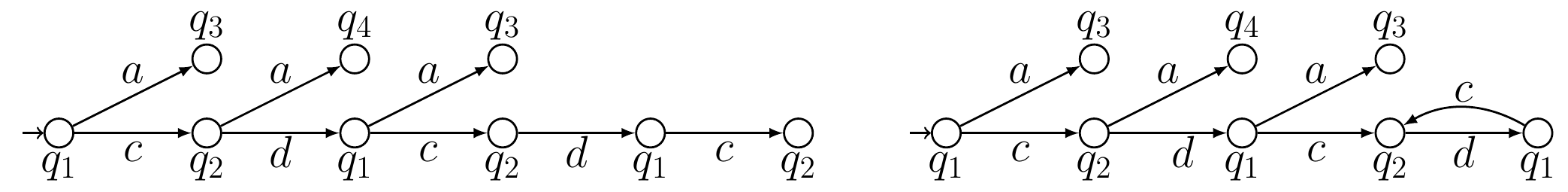}
\caption{Projection of the branching process of Figure~\ref{fig:unfolding} on 
$\A_1$ (left) and a folding (right)}\label{fig:projection}
\end{figure}

The projection $\U_i$ of the full unfolding of $\prod$ onto $\A_i$ clearly 
satisfies $\Tr{\U_i} = \Tr{\prod}_{|\Sigma_i}$; however, $\U_i$ can be
infinite. In the rest of the paper we show how to compute a 
{\em finite} branching process $\N$ and an equivalence relation $\eqrel$ between the 
conditions of $\N_i$ such that the result of {\em folding} $\N_i$ into a finite 
LTS by merging the conditions of each equivalence class yields the desired $\S_i$. 
The {\em folding} of $\N_i$ 
is the LTS whose states are
the equivalence classes of $\equiv$, and every transition $(s,s')$ of $\N_i$ yields a 
transition $([s]_\equiv,[s']_\equiv)$ of the folding.
Figure~\ref{fig:projection} shows on the right
the result of folding the LTS on the left 
when the only equivalence class with more than one member is formed by the two rightmost states labelled by $q_2$.

We construct $\N$ by starting with the branching processes without events and
iteratively add one event at a time.
Some events are marked as {\em cut-offs} \cite{Esparza08}. An event $e$ added to $\N$ becomes 
a cut-off if $\N$ already contains an $e'$, called the {\em companion} of $e$, satisfying 
a certain, yet to be specified \emph{cut-off criterion}. Events with cut-offs
in their past cannot be added. The algorithm 
terminates when no more events can be added. 
The equivalence relation $\equiv$ is determined by the {\em interface cut-offs}: the cut-offs labelled with interface actions. 
If an interface cut-off $e$ has companion $e'$, then we set $M(e)_i\equiv M(e')_i$. 
Algorithm~\ref{algo:unfolding} is pseudocode for the unfolding, where
$Ext(\N,co)$ denotes the 
{\em possible extensions}: the events which can be added to $\N$ without
events from the set $co$ of cut-offs in their past.

\begin{algorithm}[htbp]
\begin{algorithmic}
\State let $\N$ be the unique branching process of $\prod$ without events and let $\co=\emptyset$
\State {\bf While} $Ext(\N,\co)\neq\emptyset$ {\bf do}
\State\hspace*{0.4cm} choose $e$ in $Ext(\N,\co)$ and extend $\N$ with $e$
\State\hspace*{0.4cm} {\bf If} $e$ is a cut-off event {\bf then} let $\co=\co\cup\{e\}$
\State {\bf For every} $e\in\co$ with companion $e'$ {\bf do} merge $[M(e)_i]_\equiv$ and $[M(e')_i]_\equiv$
\end{algorithmic}
\caption{Unfolding procedure for a product $\prod$.}
\label{algo:unfolding}
\end{algorithm}

Notice that the algorithm is nondeterministic: the order
in which events are added is not fixed (though it necessarily respects causal relations).
We wish to find a definition of cut-offs such that the LTS $\S_i$
delivered by the algorithm
is a correct solution to the summary problem. 
Several papers have addressed the problem of defining cut-offs such that the
branching process delivered by the algorithm contains all global states of 
the system (see \cite{Esparza08} and the references therein). 
We first remark that these approaches do not ``unfold enough''.

\paragraph{\bf Standard cut-off condition does not work.}

Usually, an event $e$ is declared a cut-off if the 
branching process already contains an event $e'$ with the 
same global state. If events are added according to an {\em adequate order} \cite{Esparza08}, then the prefix generated by the algorithm is guaranteed to contain occurrence sequences leading to all reachable markings.

We show that with this definition of cut-off even we do not always compute
a correct summary. We do so by showing an example in which {\em independently of the order in which Algorithm~\ref{algo:unfolding} adds events} the summary is always wrong. Consider the parallel composition of Figure~\ref{fig:transitionsystemsapp} with $\A_1$ as interface.

\begin{figure}[htbp]
\centering
\includegraphics[scale=0.7]{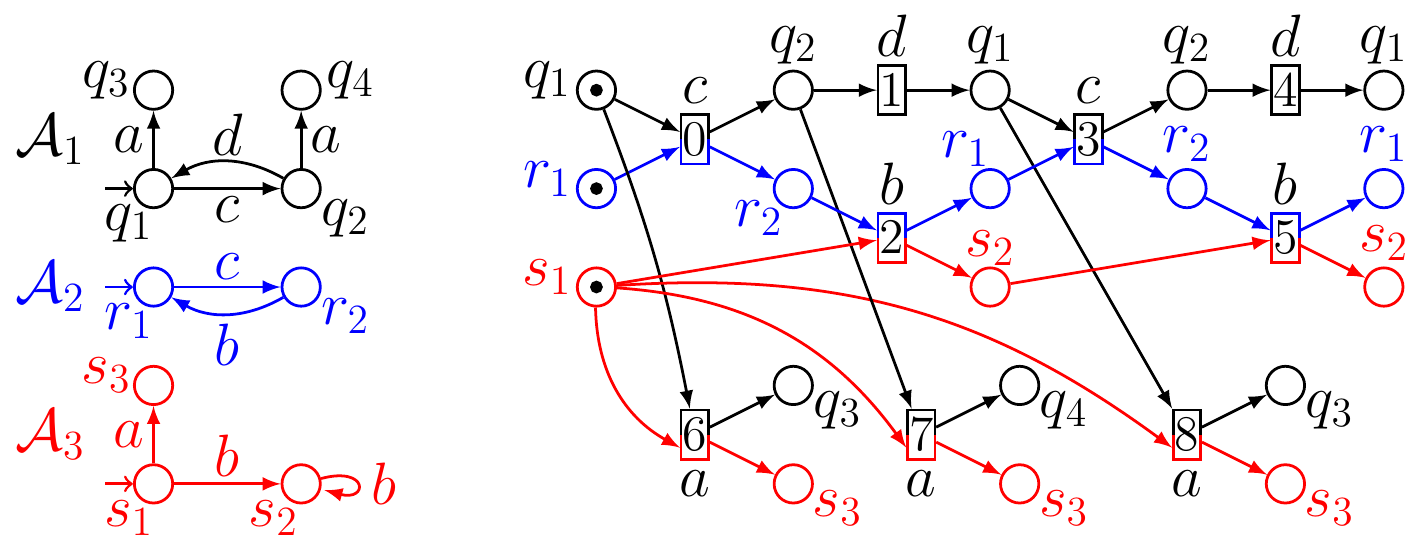}
\caption{Three labeled transition systems (left) and a branching process (right)}\label{fig:transitionsystemsapp}
\end{figure}
  
Independently of the order in which events are added, the branching process $\N$ computed  
by Algorithm \ref{algo:unfolding} is the one shown on the right of Figure~\ref{fig:transitionsystemsapp}. The only cut-off event is $5$, with companion event $2$, for which we 
have  $\st{5}=\{q_2, r_1, s_2\}=\st{2}$.  The interface projection $\N_1$ is the transition system in Figure~\ref{fig:projectionapp}. 

\begin{figure}[htbp]
\centering
\includegraphics[scale=0.55]{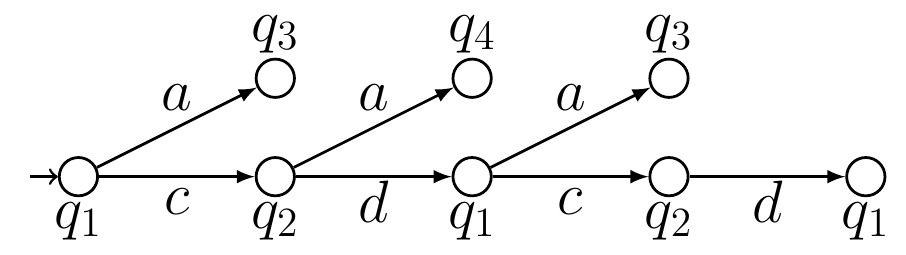}
\caption{Projection of the branching process of Figure~\ref{fig:transitionsystemsapp} on 
$\A_1$}\label{fig:projectionapp}
\end{figure}

Since $\N_1$ does not contain any cut-off, its folding is again
$\N_1$, and since $\Tr{\prod}|_{\Sigma_1}\supseteq cdc(dc)^*$, $\N_1$ is
not a summary.

\section{Two Attempts}

The solution turns out to be remarkably subtle, and so we approach it in a series of steps.

\subsection{First attempt} 

In the following we shall call events in which $\A_i$ participates
{\em $i$-events} for short; analogously, we call {\em $i$-conditions} the 
conditions labelled by states of $\A_i$.

The simplest idea is to declare an $i$-event $e$ a cut-off if the 
branching process already contains another $i$-event $e'$ with $\st{e}=\st{e'}$. Intuitively, the behaviours of the interface after the configurations $[e]$ and $[e']$ is identical, and so we only explore the future of $[e']$.

\begin{quote}
{\bf Cut-off definition 1.} An event $e$ is a cut-off event 
if it is an $i$-event and $\N$ contains an $i$-event $e'$ such that
 $\st{e}=\st{e'}$. 
\end{quote}

It is not difficult to show that this definition is correct for 
{\em non-divergent} systems.

\begin{definition}
A parallel composition $\prod$ with interface $\A_i$ is 
{\em divergent} if some infinite trace of $\prod$ contains only finitely many occurrences of actions of $\Sigma_i$. 
\end{definition}

\begin{theorem}
\label{th:interfacefair}
Let $\prod$ be non-divergent. The instance of Algorithm~\ref{algo:unfolding} with cut-off definition 1 terminates with a finite branching process $\N$, and the folding $\S_i$ of $\N_i$ is a summary of $\prod$.
\end{theorem}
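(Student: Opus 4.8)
The argument splits into two parts: termination, and correctness of the folding. For \emph{termination}, the key observation is that the algorithm can only add an unbounded number of events along a causal chain if that chain contains infinitely many $i$-events with pairwise distinct global states; since $\prod$ has finitely many global states, any sufficiently long causal chain must contain two $i$-events $e' < e$ with $\st{e}=\st{e'}$, so $e$ is a cut-off and no event causally after it is ever added. Because the unfolding is finitely branching (each condition has finitely many output events, as $\prod$ has finitely many transitions) and has no infinite causal chains without a cut-off, König's Lemma gives finiteness of $\N$. I would need to be slightly careful here: a cut-off $e$ with companion $e'$ could in principle still be ``useless'' if $e$ itself is never actually added because some event in $\past{e}$ is already a cut-off; but this only makes $\N$ smaller, so it does not threaten termination.

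For \emph{correctness}, I must show $\Tr{\S_i} = \Tr{\prod}|_{\Sigma_i}$, i.e.\ $\Tr{\S_i} = \Tr{\U_i}$ where $\U_i$ is the projection of the full unfolding. The inclusion $\Tr{\S_i} \subseteq \Tr{\U_i}$ is the easy direction: every transition of $\S_i$ comes from a transition of $\N_i$, which is a transition of $\U_i$, and folding along $\equiv$ merges states $\ma{e}_i \equiv \ma{e'}_i$ only when $\st{e}=\st{e'}$; one checks by induction on trace length that any run of $\S_i$ can be ``lifted'' to a run of $\U_i$, using the fact that from two $i$-conditions with the same $\A_i$-label (indeed, in the cut-off case the same \emph{global} state) the same interface behaviour is available. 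The reverse inclusion $\Tr{\U_i} \subseteq \Tr{\S_i}$ is the heart of the proof. Given a trace $\sigma \in \Tr{\U_i}$, realized by an occurrence sequence of $i$-events $e_1, e_2, \ldots$ in the full unfolding, I would argue by induction that $\sigma$ can be mapped into $\N_i$ up to $\equiv$: whenever the realization reaches (the cone of) a cut-off event $e_k$ with companion $e'_k$, I replay the remaining suffix of the interface behaviour starting from $e'_k$ instead — this is legitimate because $\st{e_k}=\st{e'_k}$ means the \emph{entire} product $\prod$, not just $\A_i$, has the same future, so every continuation available after $e_k$ in the unfolding is also available after $e'_k$; and in the folding the states $[\ma{e_k}_i]_\equiv$ and $[\ma{e'_k}_i]_\equiv$ coincide, so this ``jump'' is invisible. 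Iterating, and using non-divergence to guarantee that between consecutive $i$-events only finitely many internal events occur (so the suffix-replacement process makes progress and every finite prefix of $\sigma$ is eventually matched inside the finite prefix $\N$), yields the run of $\S_i$ producing $\sigma$.

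The main obstacle I anticipate is making the ``jump to the companion'' argument rigorous in the partial-order setting. The difficulty is that in the unfolding a trace is realized by an occurrence sequence, and after firing a cut-off's cone one has a \emph{marking} of $\N$, not a marking of $\N_i$ alone; the $i$-component of that marking is $\ma{e_k}_i$, but the non-$i$ components also matter for which future interface actions are enabled. The condition $\st{e_k}=\st{e'_k}$ is exactly what licenses transplanting the future, but one must phrase this as: the sub-branching-process hanging below $\ma{e_k}$ is isomorphic (as a branching process of $\prod$, relabelling conditions by states) to the one hanging below $\ma{e'_k}$, and this isomorphism restricts to $\N_i$ and is compatible with $\equiv$. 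A clean way to handle the bookkeeping is to define, for a trace $\sigma\in\Tr{\U_i}$, a canonical ``normalized'' realization that never goes strictly beyond a cut-off in $\N$, and show by well-founded induction (on the number of $i$-events lying outside $\N$, or on some measure that decreases at each jump) that such a normalized realization exists and stays inside $\N_i/\!\equiv$. Here non-divergence is essential: without it, an interface action might be enabled in $\prod$ only after infinitely many internal events, none of whose cones is a cut-off, so the finite prefix could miss it — which is precisely the phenomenon illustrated by the earlier counterexample and the reason the theorem restricts to the non-divergent case.
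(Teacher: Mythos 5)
Your proposal is correct and follows essentially the same route as the paper: termination via non-divergence plus finiteness of the global state space and K\"onig's Lemma, and the hard inclusion $\Tr{\prod}|_{\Sigma_i}\subseteq\Tr{\S_i}$ via a normalized realization that jumps to the companion at each cut-off --- your ``normalized realization that never goes strictly beyond a cut-off'' is precisely the paper's device of \emph{succinct histories}, whose blocks each end in a single, causally maximal $i$-event, so that under cut-off definition~1 only that final event can be a cut-off and the jump is justified by $\st{e}=\st{e'}$ being an equality of \emph{global} states. The one step to tighten is termination: an infinite causal chain need not itself contain any $i$-events, so one should pass to its downward closure, an infinite configuration, which by non-divergence contains infinitely many $i$-events; these are pairwise causally related because $\A_i$ participates in all of them and they are conflict-free, which yields the required infinite chain of $i$-events and hence a repeated global state.
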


\begin{proof}
Let $\N$ be the branching process constructed by Algorithm~\ref{algo:unfolding}.
Assume $\N$ is infinite (i.e., the algorithm does not terminate). 
Then $\N$ contains an infinite chain $e_1 < e_2 \cdots$ of causally related events~\cite{Khomenko03}. 
Since $\prod$ is non-divergent, the infinite configuration $C=\bigcup_{i=1}^\infty [e_i]$ contains infinitely
many $i$-events. Since the interface $\A_i$ participates in all of them, they are all causally related, 
and so $C$ contains an infinite chain $e_1' < e_2' \ldots$ of causally related $i$-events. Since
$\prod$ has only finitely many global states, the chain contains two $i$-events $e_j'<e_k'$ such that
$\st{e_j'}=\st{e_k'}$. So $e_k'$ is a cut-off, in contradiction with the fact that $e_{k+1}'$ belongs to $\N$. 
So $\N$ is finite, and so Algorithm~\ref{algo:unfolding} terminates.

\vspace{0.2cm}
It remains to prove $\Tr{\S_i}=\Tr{\prod}|_{\Sigma_i}$. We prove both inclusions separately,
but we first need some preliminaries. We extend the mapping $\st{}$ to conditions
by defining $\st{b}=\st{e}$, where $e$ is the unique input event of condition $b$. Since the states of $\S_i$ 
are equivalence classes of conditions of $\N_i$ and, by definition, if $b \equiv b'$ 
then $\st{b} = \st{b'}$, we can extend $\st{}$ further to equivalence classes by defining
$\st{[b]_\equiv} = \st{b}$. 

$\Tr{\S_i} \subseteq \Tr{\prod}|_{\Sigma_i}$. Let $\trace^i$ be a trace of $\S_i$. Then $[b^0]_\equiv \by{\trace^i}$ in $\S_i$,
where $[b^0]_\equiv$ is the initial state of $\S_i$.
By the definition of folding, there exist $\trace^i_{1},\trace^i_{2},\trace^i_{3}, \ldots$ (finite sequences of actions) and pairs $(b_1',b_1),(b_2',b_2),(b_2',b_2), \ldots$ of conditions of $\N_i$ such that
(1) $\trace^i = \trace^i_{1} \trace^i_{2} \trace^i_{3}\ldots $; (2) $b^0=b'_1$; (3) $b_{j}' \by{\trace_{j}^i} b_j$ in $\N_i$ for every $j$; 
and (4) $b_{j-1} \equiv  b_j'$ for every $j$. 

By (3) and the definition of projection, we have
$\st{b_{j}'} \by{\trace_{j}} \st{b_j}$ in $\prod$ for some $\trace_j \in \Sigma^*$ such that $\trace^i_{j}=\trace_j|_{\Sigma_i}$: indeed, if $e$ and $e'$ are the input events of $b_j$ and $b'_{j}$, then $\st{b_j}$ is reachable from $\st{b'_{j-1}}$ by means of any computation $\trace_j$ corresponding to executing the events of $[e] \setminus [e']$, and any
such $\trace_j$ satisfies $\trace^i_{j}=\trace_j|_{\Sigma_i}$. Moreover, by (4) we have $\st{b_{j-1}}=\st{b_j'}$. So we get 
$$\st{b_1'} \by{\trace_{1}} \st{b_2'} \by{\trace_{2}} \st{b_3'} \by{\trace_{3}} \cdots$$
\noindent By (1) and (2) we have $\st{b^0}\by{\trace_1\trace_2\trace_3\dots}$ in $\prod$, and so
$\trace^i=\trace|_{\Sigma_i}\in\Tr{\prod}|_{\Sigma_i}$ with $\trace=\trace_1\trace_2\trace_3\dots$.

$\Tr{\prod}|_{\Sigma_i} \subseteq \Tr{\S_i}$.  Let $\trace$ be a finite or infinite trace of $\prod$.
We prove that there exists a trace $\trace^i$ of $\S_i$ such that $\trace^i=\trace|_{\Sigma_i}$.
For that we prove that for every history $h$ of $\prod$ there exists a history $h^i$ of $\S_i$ 
such that $\trace(h^i)=\trace(h)|_{\Sigma_i}$. 

A finite history $h=\t_1 \ldots \t_k$ is {\em short} if the unique sequence of events of the unfolding
$e_1 \ldots e_k$ such that $\lambda(e_\ell)=\t_\ell$ for every $\ell \in \{1..k\}$ satisfies the
following conditions: $e_\ell \leq e_k$ for every $\ell \in \{1..k\}$, and $e_k$ is an $i$-event.
(The name is due to the fact that, loosely speaking, $h$ is a shortest history in which $e_k$ occurs.)  

We say that a finite or infinite history $h$ is {\em succinct} if there are $h_1,h_2,h_3 \ldots$ such that 
$h = h_1h_2h_3 \ldots$, $|\trace(h_k)|_{\Sigma_i}|=1$ for every $k$, and $h_1 \ldots h_\ell$ is short for every $\ell$.
We call $h_1 h_2 h_3\dots$ the {\em $i$-decomposition} of $h$.
It is easy to see that for every history $h$ of $\prod$ there exists a succinct history $h'$ of $\prod$ with the same 
projection onto $\A_i$ (let $o=o_1o_2o_3\dots$ be the occurrence sequence such that $\lambda(o)=h$, denote by $e_{i_1}e_{i_2}e_{i_3}\dots$ its $i$-events in the order they appear in $o$, then simply take for $h'$ any history with $i$-decomposition $h'_1h'_2h'_3\dots$ such that, for any $\ell$, $h'_1\dots h'_\ell$ is an history corresponding to $[e_{i_\ell}]$).
So it suffices to prove the result for 
succinct histories. 

We prove by induction the following stronger result.
For every succinct history of $\prod$ with $i$-decomposition $h_1h_2h_3 \ldots$ there exist $h_1^i,h_2^i,h_3^i,\dots$ such that for every $k$:
\begin{enumerate}
\item[(a)] $H^i_k=h^i_1\dots h^i_k$ is an history of $\S_i$ such that $\trace(H^i_k)=\trace(h_1\dots h_k)|_{\Sigma_i}$.
\item[(b)] There exists a configuration $C_k$ of $\N$ that contains no cut-offs and such that
$[\ma{C_k}_i]_\equiv$ is the state reached by $H^i_k$.
\end{enumerate}

\noindent{\bf Base case.}
If $k=0$, then $H^i_k$ is the empty history of $\S_i$, take $C_k = \emptyset$.

\noindent{\bf Inductive step.}
Let $H_{k+1}$ be the prefix of $h$ with $i$-decomposition $H_{k+1}=h_1 \ldots h_k h_{k+1}$ (it is a succinct history of $\prod$).
Then $H_k = h_1 \ldots h_k$ is succinct with $i$-decomposition $h_1 \ldots h_k$. 
By induction hypothesis $H_k^i=h_1^i\dots h_k^i$ and some configuration $C_k$
satisfy the conditions above. 

Let $o_{k+1}=e_1 \dots e_m$, where $m = |h_{k+1}|$,  be the only sequence of events 
whose labelling is $h_{k+1}$ and can occur in the order of the sequence from the marking $\ma{C_k}$ 
(this sequence always exists by the properties of $C_k$). 
Two cases are possible.
\begin{enumerate}
\item $o_{k+1}$ contains no cut-off.
In this case $o_{k+1}$ is a sequence of events from $\N$ (because $C_k$ contains no cut-offs). 
Thus, there exists an execution $h_{i,k+1}$ of $\S_i$ from the state $[\ma{C_k}_i]_\equiv$ 
to the state $[\ma{e_{m}}_i]_\equiv$ such that $\trace(h_{i,k+1})=\trace(h_{k+1})|_{\Sigma_i}$.
So we can take $h^i_{k+1}=h_{i,k+1}$.
It remains to choose the configuration $C_{k+1}$. We take $C_{k+1}$ as $C_k \cup \{e_1, \ldots, e_m\}$,
which contains no cut-offs because $C_k$ contains no cut-offs by hypothesis.
\item $o_{k+1}$ contains some cut-off.
Since $h_k$ is succinct, $e_m$ is the only $i$-event of $h_{k+1}$, and the only
maximal event of $\{e_1, \ldots, e_m\}$ w.r.t. the causal relation. Since
only $i$-events can be cut-offs, $e_m$ is a cut-off, and the only cut-off among the events of
$o_{k+1}$. So $o_{k+1}$ is a sequence of events from $\N$ whose last event is a cut-off.
Further, by the maximality of $e_m$, the marking reached by $o_{k+1}$ is $\ma{e_m}$.
By the definition of folding, $\S_i$ has an execution $h_{i,k+1}$ from the state 
$[\ma{C_{k}}_i]_\equiv$ to the state $[\ma{e_m}_i]_\equiv$ such that 
$\trace(h_{i,k+1})=\trace(h_{k+1})|_{\Sigma_i}$. As above, this allows to take $h^i_{k+1}=h_{i,k+1}$.

It remains to choose the configuration $C_{k+1}$. We cannot take 
$C_{k+1} = C_{k} \cup \{e_1, \ldots, e_m\}$, because then $C_{k+1}$ would contain cut-offs. 
So we proceed differently. We choose $C_{k+1}= [e_m']$, where $e_m'$ is the companion of $e_m$. 
Since $e_m'$ is not a cut-off, $C_{k+1}$ contains no cut-offs. Moreover, since the marking 
reached by $o_{k+1}$ is $\ma{e_m}$, we have that $[\ma{C_{k+1}}_i]_\equiv$ is the state reached by 
$H_{k+1}^i$.
\end{enumerate}
\end{proof}

The system of Figure~\ref{fig:transitionsystems} is non-divergent. 
Algorithm \ref{algo:unfolding} computes the branching process 
on the right of Figure~\ref{fig:transitionsystems}. The only cut-off is
event $9$ with companion $3$. The folding is shown in Figure~\ref{fig:projection} (right) and is a correct summary.
However, cut-off definition 1 {\em never} works if $\prod$ is divergent
because the unfolding procedure does not terminate. Indeed, if the system 
has divergent traces then we can easily construct an infinite firing sequence of the unfolding such that none of the finitely many $i$-events 
in the sequence is a cut-off. Since no other events can be cut-offs, Algorithm~\ref{algo:unfolding} adds all events of the sequence. This occurs for 
instance for the system of Figure~\ref{fig:needconc} with interface $\A_1$, where the occurrence sequence of the unfolding for the trace $\mathit{i}(\mathit{fcd})^\omega$ contains no cut-off.

\subsection{Second attempt} 
To ensure termination for divergent systems, we extend the definition of cut-off. For this, we define for each event $e$ its {\em $i$-predecessor}. Intuitively, the $i$-predecessor of an event $e$ is the last condition that $e$ ``knows'' has been
reached by the interface. 

\begin{definition}
The 
{\em $i$-predecessor} of an event $e$, denoted by $ip(e)$, is the condition $M(e)_i$.
\end{definition}

Assume now that two events $e_1 < e_2$, neither of them interface event, satisfy $ip(e_1)=ip(e_2)$
and $\st{e_1}=\st{e_2}$. Then any occurrence sequence $\sigma$ that executes 
the events of the set $[e_2] \setminus [e_1]$ leads from a marking to itself and 
contains no interface events. So $\sigma$ can be repeated infinitely often,
leading to an infinite trace with only finitely many interface actions. It is therefore
plausible to mark $e_2$ as cut-off event, in order to avoid this infinite repetition.

\begin{quote}
{\bf Cut-off definition 2.} An event $e$ is a cut-off if 
\begin{enumerate}
\item[(1)] $e$ is an $i$-event, and $\N$ contains an $i$-event $e'$ with $\st{e}=\st{e'}$, or
\item[(2)] $e$ is not an $i$-event, and some event $e'< e$ satisfies $\st{e}=\st{e'}$ and $ip(e)=ip(e')$.
\end{enumerate}
\end{quote}

We give 
an example showing that this natural definition does not work: the algorithm always terminates
but can yield a wrong result. Consider the parallel composition at the left of
Figure \ref{fig:concneeded}, with interface $\A_1$. Clearly
$\Tr{\A}|_{\Sigma_1} = \Tr{A_1} = iab^*e$. For any strategy the algorithm generates the branching 
process $\N$ at the top right of the figure (without the dashed part). $\N$ has two cut-off events: the interface event $6$,
which is of type (1), and event $8$, a non-interface event, of type (2). Event $6$ has $5$ as companion,
with $\st{5}=\st{6}= \{q_2,r_2,s_2\}$. Event $8$ has $0$ as companion, with $\st{0}=\{q_1,r_1,s_1\}=\st{8}$;
moreover, $0 < 8$ and $ip(0) = ip(8)$. The folding of $\N_1$ 
is shown at the bottom right of the figure.
It is clearly not trace-equivalent to $\A_1$ because it ``misses'' the trace 
$iabe$. The dashed event at the bottom right, which would correct this,
is not added by the algorithm because it is a successor of $8$. 

\begin{figure}[htbp]
\centering
\includegraphics[scale=0.55]{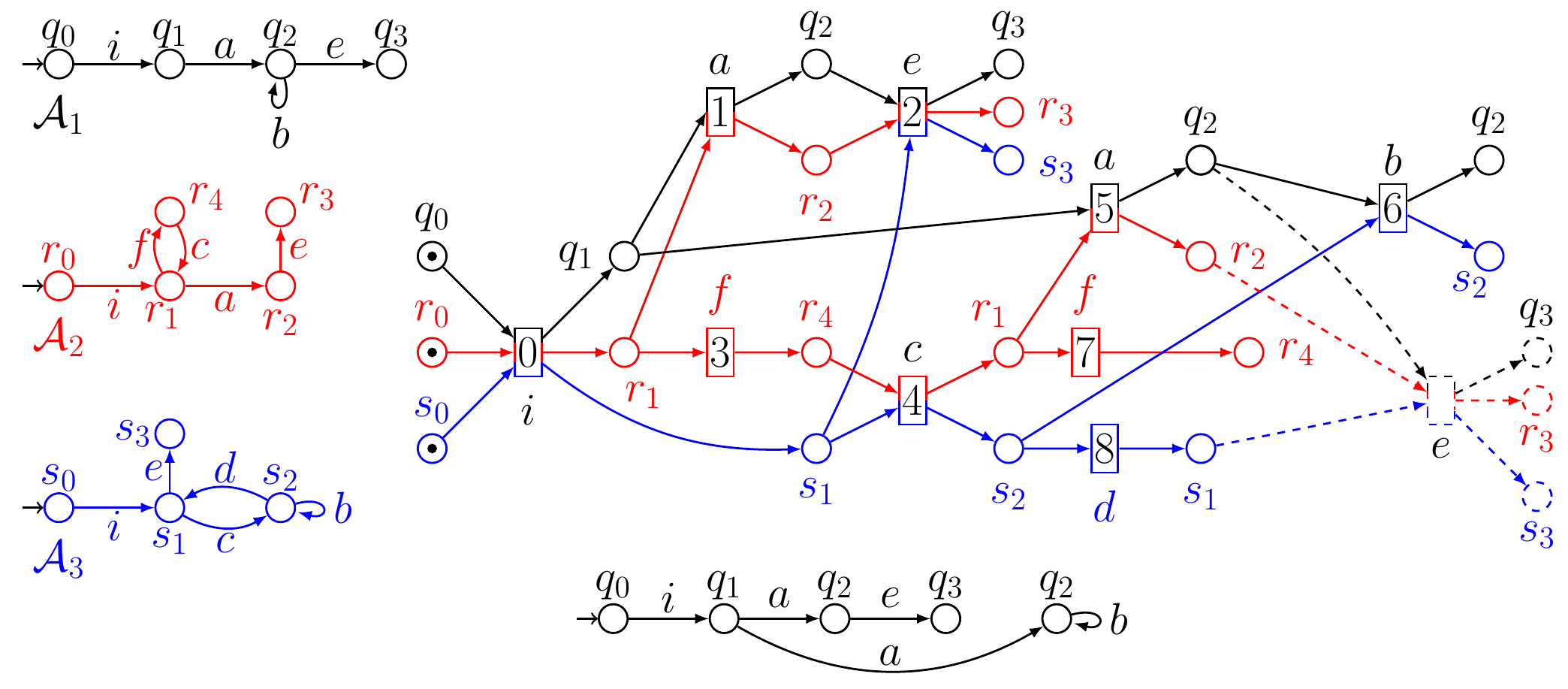}
\caption{Cut-off definition 2 produces an incorrect result on $\prod=\A_1 \parallel \A_2 \parallel \A_3$}\label{fig:needconc}
\label{fig:concneeded}
\end{figure}

\section{The Solution}

Intuitively, the reason for the failure of our second attempt
on the example of Figure \ref{fig:concneeded} is that $\A_1$ can only execute $iabe$ if $\A_2$ and $\A_3$ 
execute $\mathit{ifcd}$ first. However, when the algorithm observes that the markings before and after 
the execution of $\mathit{ifcd}$ are identical, it declares $8$ a cut-off event, and so it cannot ``use'' it to construct event $e$.
So, on the one hand, $8$ should not be a cut-off event. But, on the other hand, {\em some} event of the trace 
$\mathit{i}(\mathit{fcd})^\omega$ must be declared cut-off, otherwise the algorithm does not terminate. 

The way out of this dilemma is to introduce {\em cut-off candidates}.
If an event is declared a cut-off candidate, the algorithm does not add any of its 
successors, just as with regular cut-offs. However, cut-off candidates may stop being candidates if the addition of a new event {\em frees} them. 
(So, an event is a cut-off candidate {\em with respect to the current branching process}.) A generic unfolding procedure using these ideas is given in Algorithm~\ref{algo:unfolding2}, where 
$Ext(\N,\co,\tco)$ denotes the possible extensions of $\N$ that do not have any 
event of $\co$ or $\tco$ in their past.
Assuming suitable definitions of cut-off candidates and freeing, the algorithm
would, in our example, declare event $8$ a cut-off candidate, momentarily stop adding any of its successors, but later free event $8$ when event $5$ is discovered.

\begin{algorithm}[htbp]
\begin{algorithmic}
\State let $\N$ be the unique branching process of $\prod$ without events; let $\co=\emptyset$ and $\tco=\emptyset$
\State {\bf While} $Ext(\N,\co,\tco)\neq\emptyset$ {\bf do}
\State\hspace*{0.4cm} choose $e$ in $Ext(\N,\co,\tco)$ according to the search strategy
\State\hspace*{0.4cm} {\bf If} $e$ is a cut-off event {\bf then} let $\co=\co\cup\{e\}$
\State\hspace*{0.4cm} {\bf Elseif} $e$ is a cut-off candidate of $\N$ {\bf then} let $\tco=\tco\cup\{e\}$
\State\hspace*{0.4cm} {\bf Else for every} $e' \in \tco$ {\bf do}
\State\hspace*{0.8cm} {\bf If} $e$ frees $e'$ {\bf then} $\tco=\tco\setminus \{e'\}$
\State\hspace*{0.4cm} extend $\N$ with $e$
\State {\bf For every} $e \in \co$ with companion $e'$ {\bf do} merge $[M(e)_i]_\equiv$ and $[M(e')_i]_\equiv$
\end{algorithmic}
\caption{Unfolding procedure for a product $\prod$.}
\label{algo:unfolding2}
\end{algorithm}

The main contribution of our paper is the definition of a correct notion of cut-off candidate for the projection problem. 
We shall declare event $e$ a  cut-off candidate if $e$ is not an interface event,
and $\N$ contains a companion $e' < e$ such that $\st{e'} = \st{e}$, $ip(e)=ip(e')$, and, additionally, no interface event $e''$ of $\N$ is concurrent with $e$ without being concurrent with $e'$.
As long as this condition holds, the successors of $e$ are put ``on hold''. In the example of Figure \ref{fig:concneeded}, if the algorithm first adds events $0$, $3$, $4$, and $8$, then event $8$ becomes a cut-off candidate with $0$ as companion. However, the addition of the interface event $5$ frees event $8$, because $5$ is concurrent with $8$ and not with $0$.

However, we are not completely done yet. 
The parallel composition at the left of Figure~\ref{fig:strong} gives an example in which 
even with this notion of cut-off candidate the result is still wrong.
$\A_1$ is the interface.
One branching process is represented at the top right of the figure. 
Event 3 (concurrent with 1) is a cut-off candidate with 2 (concurrent with 1, 4, and 5) as companion.
This prevents the lower dashed part of the net to be added.
Event 6 is cut-off with 1 as companion.
This prevents the upper dashed part of the net to be added.
The refolding obtained then (bottom right) does not contain the word $abcb$.

\begin{figure}[htbp]
\centering
\includegraphics[scale=0.55]{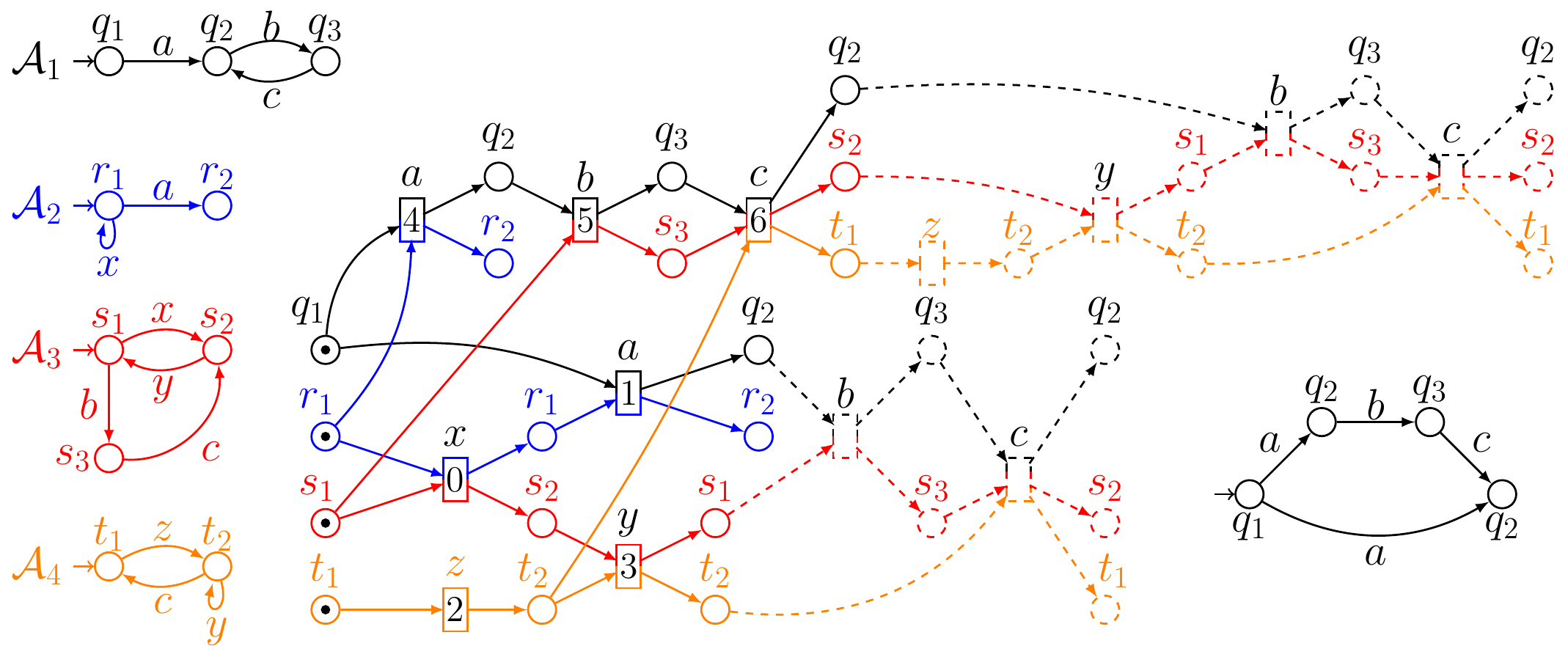}
\caption{An example illustrating the use of strong causality}\label{fig:strong}
\end{figure}

If we wish a correct algorithm for all strategies, we need a final touch: replace the condition $e' < e$ by $e' \scause e$, where $\scause$ is the {\em strong causal relation}:
\begin{definition}
Event $e'$ is a {\em strong cause} of event $e$, denoted by $e' \scause e$,
if $e' < e$ and $ b' < b$ for every $b \in \ma{e} \setminus \ma{e'}, b' \in \ma{e'} \setminus \ma{e}$. 
\end{definition}

Using this definition, event 3 is no longer a cut-off candidate in the branching process of Figure~\ref{fig:strong} as it is not in strong causal relation with its companion 2 (because the $t_2$-labelled condition just after 2 belongs to $\ma{2}\setminus\ma{3}$ and is not causally related with the $r_1$-labelled condition just after $0$ which belongs to $\ma{3}\setminus\ma{2}$).

The two following lemma give properties of the strong causal relation that will be useful to prove our main result (Theorem~\ref{thm:correctnessacceptable}).

\begin{lemma}
\label{lem:causal}
Every infinite chain $e_1 < e_2 < e_3 \cdots$ of events of a branching process
contains a strong causal subchain $e_{i_1} \scause e_{i_2} \scause e_{i_3} \cdots$. 
\end{lemma}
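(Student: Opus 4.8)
The plan is to exploit the fact that the "strong" part of a causal link between $e'$ and $e$ only fails because some condition of $\ma{e'}$ survives into $\ma{e}$ (i.e.\ lies in $\ma{e'}\cap\ma{e}$) while a new condition of $\ma{e}\setminus\ma{e'}$ is incomparable with it. Concretely, for $e' < e$ write $\ma{e'}\setminus\ma{e}$ as the set of conditions of $\ma{e'}$ that get consumed on the way from $[e']$ to $[e]$; these conditions are exactly the ones ``touched'' by the events of $[e]\setminus[e']$. The key observation is that $e' \scause e$ holds iff \emph{every} component $j$ whose condition in $\ma{e'}$ is consumed between $e'$ and $e$ has that consumption causally below the corresponding new condition in $\ma{e}$ --- equivalently, iff the set of components ``moved'' between $e'$ and $e$ all lie causally above $e'$ in a uniform way. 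So the first step is to reformulate $\scause$ in terms of which $\ma{\cdot}_j$ conditions change, and to record the elementary fact that $\scause$ is transitive (this follows from the definition by a short argument: if $e_1 \scause e_2 \scause e_3$, a condition in $\ma{e_3}\setminus\ma{e_1}$ is either in $\ma{e_3}\setminus\ma{e_2}$, hence above everything in $\ma{e_2}\setminus\ma{e_3}\supseteq$ part of $\ma{e_1}$, or in $\ma{e_2}\setminus\ma{e_1}$ directly, and one chases the causal inequalities through).

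Granting transitivity of $\scause$, the proof of the lemma is a combinatorial pigeonhole argument on the chain $e_1 < e_2 < \cdots$. For each pair $j < k$ the failure of $e_j \scause e_k$ is witnessed by a component index $c(j,k) \in \{1,\dots,n\}$ such that the $c(j,k)$-condition of $\ma{e_j}$ still belongs to $\ma{e_k}$ (it was never consumed between $e_j$ and $e_k$) yet some newly produced condition in $\ma{e_k}\setminus\ma{e_j}$ is concurrent with it. The crucial monotonicity fact is: if the $j$-th component's $\ma{\cdot}_m$-condition is unchanged from $e_j$ all the way to $e_k$, then it is also unchanged on every intermediate step, so ``component $m$ is frozen on $[j,k]$'' is an interval-closed property. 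Since there are only finitely many components, by an infinite-Ramsey / König-type argument there is an infinite sub-chain $e_{i_1} < e_{i_2} < \cdots$ along which the set of components that are ever ``frozen forever from here on'' stabilises; more directly, one picks the sub-chain greedily so that consecutive members are in strong causal relation, using that once we pass a point where every component that is going to stay frozen forever has already done so, any two later events are in strong causal relation.

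The cleanest route, which I would actually write up, is: consider for each $m \in \{1,\dots,n\}$ the (possibly finite) set of indices $k$ at which $\ma{e_k}_m$ differs from $\ma{e_{k-1}}_m$; call component $m$ \emph{active} if this set is infinite. Let $A$ be the set of active components; since the chain is infinite and each step changes at least one component (each $e_k$ participates some component), $A \neq \emptyset$. Pick an index $N$ past which no \emph{inactive} component ever changes again. Then I claim that for all $N \le j < k$ we have $e_j \scause e_k$: indeed, a condition $b' \in \ma{e_j}\setminus\ma{e_k}$ sits in a component that was consumed between $j$ and $k$, hence an active component or one that changed after $N$ --- by choice of $N$ it is active; and a condition $b \in \ma{e_k}\setminus\ma{e_j}$ likewise sits in a component consumed between $j$ and $k$. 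One then verifies $b' < b$ by tracking, inside the configuration $[e_k]$, the last event before or at $e_k$ that touches that component and observing it must lie causally above the event that produced $b'$. Passing to the sub-chain $e_N < e_{N+1} < \cdots$ and then, if one wants a \emph{strict} strong-causal chain rather than merely consecutive $\scause$-pairs, invoking transitivity of $\scause$ to thin it out, gives the desired $e_{i_1} \scause e_{i_2} \scause \cdots$.

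The main obstacle I anticipate is the verification of the inequality $b' < b$ in that last step: one has a condition $b'$ of component $m$ that is consumed by some event $f$ with $e_j < f \le e_k$, and a condition $b$ of component $m'$ produced by some event $g$ with $e_j < g \le e_k$, and one must show $b' < b$, i.e.\ that the production of $b$ causally depends on $b'$ being there. This is not automatic for arbitrary $m \ne m'$ --- it really uses that \emph{all} of $\ma{e_j}\setminus\ma{e_k}$ lies below \emph{all} of $\ma{e_k}\setminus\ma{e_j}$, which is precisely what we need to establish, so the argument has to be set up carefully, probably by first proving the statement for a single unfolding step ($k = j+1$) where it is transparent (the new event $e_{j+1}$ consumes exactly the conditions in $\ma{e_j}\setminus\ma{e_{j+1}}$ and produces exactly those in $\ma{e_{j+1}}\setminus\ma{e_j}$, and the consumed ones are arcs into $e_{j+1}$ while the produced ones are arcs out, so every produced condition is a causal successor of every consumed one), and then chaining steps together using transitivity of $\scause$ together with the interval-closedness of ``component $m$ frozen''. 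Handling the bookkeeping of which conditions persist across many steps, without double-counting, is the fiddly part; everything else is routine pigeonhole.
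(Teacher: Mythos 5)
Your setup is the right one --- the split into \emph{active} components (the paper's $\bar S$) and components that eventually stabilize, and the choice of a point $N$ past which the stabilized components never move again, is exactly how the paper begins. But the central claim that carries your proof, namely that for all $N\le j<k$ we have $e_j\scause e_k$, is false, and the gap is not the bookkeeping issue you flag at the end but a genuine missing idea. The problem is that two active components can move \emph{concurrently}: between $e_j$ and $e_k$ the condition $b'=\ma{e_j}_m$ may be consumed by an event of $[e_k]\setminus[e_j]$ while a fresh condition $b=\ma{e_k}_{m'}\in\ma{e_k}\setminus\ma{e_j}$ is produced by an event of $[e_k]\setminus[e_j]$ that is concurrent with (or at least not causally above) the producer of $b'$; then $b'\not< b$ and $e_j\not\scause e_k$. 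Activity of a component says only that it moves infinitely often along the chain, not that its moves between two \emph{given} indices are causally above everything that was consumed. Your proposed repair --- prove the single-step case and chain by transitivity --- does not close this, because consecutive elements $e_j<e_{j+1}$ of the chain are not related by a single unfolding step ($[e_{j+1}]\setminus[e_j]$ may contain many events, including ones concurrent to $e_j$), so even $e_j\scause e_{j+1}$ can fail for the same reason; transitivity of $\scause$ is in any case not needed, since the lemma only asks that \emph{consecutive} members of the subchain be $\scause$-related.

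What is missing is a two-stage selection for each link of the subchain: starting from $e_\alpha$ (past stabilization), first find $e_\beta>e_\alpha$ in the chain such that every active component has moved, so that $\ma{e_\alpha}_k<e_\beta$ for all active $k$; then --- and this is the substantive step --- find $e_\gamma>e_\beta$ in the chain such that $\ma{e_\gamma}_k>e_\beta$ for \emph{every} active $k$, which forces every condition of $\ma{e_\gamma}\setminus\ma{e_\alpha}$ to lie above $e_\beta$ and hence above every condition of $\ma{e_\alpha}\setminus\ma{e_\gamma}$, giving $e_\alpha\scause e_\gamma$. The existence of such an $e_\gamma$ is not automatic and is where the real work lies: the paper proves it by contradiction, showing that the set of active components $k$ with $\ma{e_\ell}_k>e_\beta$ grows strictly along a suitable subchain (using that $e<\ma{e_\ell}_k$ implies $e<\ma{e_{\ell'}}_k$ for $\ell'>\ell$, and locating, for a component that allegedly never gets above $e_\beta$, an event with one input above $e_\beta$ and one not). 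Your proposal contains no analogue of this step, so as written it establishes only the partition and the starting point, not the existence of the strong causal links.
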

\begin{proof}
Let $E=\{e_1,e_2,\dots\}$. Say that a component $\A_j$ of $\prod$ participates
in an event $e$ if it participates in the transition labelling $e$. We partition the (indices of the)
components into the set $S$ of indices $j$ such that $\A_j$ participates in finitely many 
events of $E$, and $\bar{S} = \{1,\ldots,n\} \setminus S$. We say that the LTS $\A_j$ has 
{\em stabilized} at event $e_k$ in the chain if $\A_j$ does not participate in any event $e \geq  e_k$. 
Let $e_{\alpha}$ be any event of $E$ such that all LTSs of $S$ 
have stabilized before $e_{\alpha}$. We claim that there exists $e_\gamma$ in $E$ such that
$e_{\alpha} \scause e_{\gamma}$. Since clearly all LTSs of $S$ have also stabilized before $e_\gamma$,
A repeated application of the claim produces the desired subsequence. The claim itself
is proved in two steps:
\begin{itemize}
\item[(1)] There exists $e_{\beta} > e_{\alpha}$ in $E$ such that $M(e_{\beta})_k\neq M(e_{\alpha})_k$
for every $k\in\bar{S}$, (which implies $M(e_{\alpha})_k < e_\beta$ for every $k\in\bar{S}$).\\
The existence of $e_\beta$ follows from (1) the fact that all events of $E$ are causally related, and (2) the definition of $\bar{S}$, which implies for any $k\in\bar{S}$ the existence of an infinite subchain $e_{\ell_1}<e_{\ell_2}<\dots$ such that $M(e_{\ell_i})_k\neq M(e_{\ell_j})_k$ for every $i,j$.
\item[(2)] There exists $e_{\gamma} > e_{\beta}$ in $E$ such that $M(e_{\gamma})_k>e_\beta$ for every $k\in\bar{S}$.\\
Observe that if $e<M(e_i)_k$ for some $i$ and some $k$, then $e<M(e_j)_k$ for all $j>i$ (as $\forall i,j, \forall k,  M(e_i)_k\leq M(e_j)_k$).
Suppose that $e_\gamma$ does not exist.
Then there exists $k\in\bar{S}$ such that $M(e')_k\ngtr e$ for every $e'>e$.
As $k\in\bar{S}$, there exists, by definition, an infinite subchain $e<e_{\ell_1}<e_{\ell_2}\dots$ of $E$ such that $M(e_{\ell_i})_k\neq M(e_{\ell_j})_k$ for every $i,j$.
So for any of these $e_{\ell_i}$ there exists a $k$-event $e_{\ell_i}'$ such that 
$e_{\ell_i}'<e_{\ell_i}$ and $e_{\ell_i}'$ is concurrent with $e_{\ell_{i-1}}$.
Let $e_{\ell_i}''$ be an event on a path from $e_{\ell_i}'$ to $e_{\ell_i}$ and such that 
$b>e$ and $b'\ngtr e$ for some $b,b'\in \inp{e_{\ell_i}''}$ (the existence of such an event is ensured by the fact that $M(e_{\ell_i})_k\ngtr e$).
As $b>e$ we get $e_{\ell_i}''>e$ and thus $b''>e$ for every $b''\in \out{e_{\ell_i}''}$.
Hence, by the observation above, the set $\{k\in\bar{S}~:~M(e_{\ell_i})_k>e\}$ is strictly greater than the set $\{k\in\bar{S}~:~M(e_{\ell_{i-1}})_k>e\}$.
Since $\prod$ is finite, this contradicts the existence of $k\in\bar{S}$ such that 
$M(e')_k\ngtr e$ for every $e'>e$ in $E$. So the event $e_\gamma$ exists.
\end{itemize}
It follows immediately from (1) and (2) that $e_{\alpha} \scause e_{\gamma}$ (because for any $k,k'$, $M(e_\alpha)_k<e_\beta<M(e_\gamma)_{k'}$), and all LTSs of $S$ have stabilized before $e_\gamma$, and so the claim is proved.
\end{proof}

\begin{lemma}
\label{lem:conc}
If $e' \scause e$ and $\hat{e}$ is concurrent with both $e'$ and $e$, then 
$([e] \setminus [e']) \cap [\hat{e}] = \emptyset$.
\end{lemma}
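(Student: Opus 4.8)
The plan is to argue by contradiction. Since $(\past{e}\setminus\past{e'})\cap\past{\hat e} = (\past{e}\cap\past{\hat e})\setminus\past{e'}$, it is equivalent to show $\past{e}\cap\past{\hat e}\subseteq\past{e'}$. Assume this fails and let $g$ be a minimal element (w.r.t. $<$) of $D\setminus\past{e'}$, where $D:=\past{e}\cap\past{\hat e}$ is a configuration (an intersection of configurations). Minimality forces every strict causal predecessor of $g$ into $\past{e'}$; using conflict-freeness of $\past{e}$ one then checks that the input conditions of $g$ are exactly the conditions $\ma{e'}_j$ for the components $\A_j$ participating in $g$, that these lie in $\ma{e'}\setminus\ma{e}$ (they are consumed by $g\in\past{e}$), that no such $\A_j$ participates in $e'$ (otherwise $e'<\ma{e'}_j<g\le\hat e$, contradicting that $\hat e$ is concurrent with $e'$), and finally — via the last two facts — that $g$ itself is concurrent with $e'$. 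Also $g\ne e$ and $g\ne\hat e$, so $g<e$ and $g<\hat e$.

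I will use repeatedly that conflict propagates upward along causality: if $a\conf b$ and $b\le c$ then $c\conf b$, and symmetrically in the first argument. Hence $\past{x}\cup\past{y}$ is a configuration whenever $x,y$ are not in conflict; in particular $\hat e$ is in conflict with no event $\le e$ and with no event $\le e'$. A further consequence of $\hat e$ being concurrent with $e$, recorded for later: every event of $\past{e}\setminus D$ is concurrent with every event of $\past{\hat e}\setminus D$, so these two sets move disjoint sets of components, and the markings reached after $D$, after $\past{e}$, and after $\past{\hat e}$ agree on all components not moved past $D$.

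The core is a descent along causal paths. Take an event $z\in D$ with $z<e$, $z<\hat e$ and $z$ concurrent with $e'$ (all true for $z=g$). Choose causal paths from $z$ to $e$ and from $z$ to $\hat e$, and let $y$ be their last common node. If $y$ is a condition, the two paths leave $y$ by distinct arcs, witnessing $e\conf\hat e$ — contradiction. If $y$ is an event, then $y\in D$, $z\le y$, the paths leave $y$ through distinct output conditions $c_1$ (toward $e$, in some component $j_1$) and $c_2$ (toward $\hat e$, in a component $j_2\ne j_1$), and $y$ is again concurrent with $e'$ (the case $y<e'$ would give $g\le z\le y\in\past{e'}$, and the remaining cases are excluded as before). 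Let $v_1,v_2$ be the events consuming $c_1,c_2$ on the two paths. If $v_1\in\past{\hat e}$, then $v_1\in D$, $v_1<e$, $v_1<\hat e$, $v_1>y$; likewise if $v_2\in\past{e}$; in either case restart the descent from $v_1$ resp.\ $v_2$. Since this strictly increases the chosen event inside the finite set $D$, the descent terminates, and it can only end in the condition case above or with $v_1\notin\past{\hat e}$ and $v_2\notin\past{e}$ — which, by conflict-freeness of $\past{\hat e}$ and of $\past{e}$, forces $c_1=\ma{\hat e}_{j_1}$ and $c_2=\ma{e}_{j_2}$ (with $c_1<e$, $c_2<\hat e$, both outputs of $y\in D$, and $y$ concurrent with $e'$).

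Eliminating this terminal configuration is where $e'\scause e$ — i.e.\ the cross-component links $\ma{e'}_j<\ma{e}_{j'}$ for all components $j,j'$ moved by $\past{e}\setminus\past{e'}$ — is essential. If $c_2=\ma{e'}_{j_2}$, then $y$, the producer of $c_2$, would be the last $j_2$-event of $\past{e'}$ and hence lie in $\past{e'}$, contradicting $y$ concurrent with $e'$; so $j_2$ is moved by $\past{e}\setminus\past{e'}$. If $c_1\le\ma{e'}_{j_1}$, one derives (as above, or via $v_1\in\past{e'}$) that $g\in\past{e'}$ — contradiction; so $\ma{e'}_{j_1}<c_1$, $j_1$ is likewise moved by $\past{e}\setminus\past{e'}$, and $\ma{e'}_{j_1}\in\ma{e'}\setminus\ma{e}$. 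Strong causality now gives $\ma{e'}_{j_1}<\ma{e}_{j_2}=c_2<\hat e$, so $\ma{e'}_{j_1}<\hat e$; and $\A_{j_1}$ cannot participate in $e'$ (else $e'<\ma{e'}_{j_1}<\hat e$), so the $j_1$-token stands at $\ma{e'}_{j_1}$ throughout $\past{e'}$. Combining this with the symmetric link $\ma{e'}_{j_2}<\ma{e}_{j_1}$ and the disjointness of the components moved past $D$ yields the final contradiction with $\hat e$ being concurrent with $e$ (or with $e'$). The step I expect to be the main obstacle is exactly this last one: threading the strong-causality cross-component links through the bookkeeping of which events lie in $\past{e'}$, in $\past{\hat e}$, in $D$, and which components each sub-configuration moves, so as to land a genuine causal relation or conflict between $\hat e$ and $e$ (or $e'$) rather than another harmless inequality.
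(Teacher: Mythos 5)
Your descent does correctly manufacture the key object --- a condition $c_2=\ma{e}_{j_2}\in\ma{e}\setminus\ma{e'}$ with $c_2<\hat e$ (this is exactly the waypoint the paper reaches much more directly, by walking along a single causal path from an element of $(\past{e}\setminus\past{e'})\cap\past{\hat{e}}$ up to $\hat e$ and taking the first condition on it that is concurrent with $e$). But the step you yourself flag as the ``main obstacle'' is a genuine gap, and the route you sketch for it does not close. You try to exploit $\ma{e'}_{j_1}<c_2<\hat e$, but $\ma{e'}_{j_1}$ is in general \emph{not} an output condition of $e'$ --- your own argument shows that $\A_{j_1}$ does not participate in $e'$ --- so $\ma{e'}_{j_1}<\hat e$ yields no causal link between $e'$ and $\hat e$ and hence no contradiction with their concurrency. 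The intermediate claim that ``the $j_1$-token stands at $\ma{e'}_{j_1}$ throughout $\past{e'}$'' is also false as stated: $\A_{j_1}$ not participating in $e'$ does not prevent other events of $\past{e'}$ from moving component $j_1$. The closing appeal to ``disjointness of the components moved past $D$'' is not an argument.

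The missing move is a one-liner, and it is the only place where $e'\scause e$ is really needed. Since $e'<e$, some path from $e'$ to $e$ starts with a condition $b'\in\out{e'}$ satisfying $b'<e$; then $b'\in\ma{e'}$ (no event of $\past{e'}$ can consume an output of its maximal event $e'$) and $b'\notin\ma{e}$ (it is consumed inside $\past{e}$), so $b'\in\ma{e'}\setminus\ma{e}$. The definition of strong causality applied to the pair $b'\in\ma{e'}\setminus\ma{e}$, $c_2\in\ma{e}\setminus\ma{e'}$ gives $b'<c_2<\hat e$, hence $e'<\hat e$, contradicting that $e'$ and $\hat e$ are concurrent. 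With this substituted for your last paragraph the argument goes through; note, however, that everything before it (the minimal element $g$, the characterization of its input conditions, the two-path descent with $v_1,v_2$) serves only to produce $c_2$, which the paper obtains from a single path with far less bookkeeping.
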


\begin{proof}
Assume $e_1 \in ([e] \setminus [e']) \cap [\hat{e}]$. 

Then $e_1 \leq e$ and $e_1 \leq \hat{e}$. Since $e$ and $\hat{e}$ are concurrent,
we have $e \neq e_1 \neq \hat{e}$. So $e_1 < \hat{e}$, and so
there is a nonempty path $e_1 \prec b_1 \prec e_2 \prec b_2 \prec \ldots \prec e_k = \hat{e}$,
where $x \prec y$ denotes $y \in \out{x}$. Since $e$ and $\hat{e}$ are concurrent, there is 
a first condition $b_j$ in the path such that $b_j$ and $e$ are concurrent, and we have
$b_j \in \ma{e}$. Since $e_1 \notin [e']$, we have $b_j \notin \ma{e'}$. Since
$e' \scause e$, we have $b_j < b$ for every $b \in \ma{e'} \setminus \ma{e}$. In
particular, since there is at least one condition $b'$ such that $e'\prec b' < e$,
we have $b_j < b'$, and so $e' < b_j$. But then, since $b_j$ belongs to the path 
from $e_1$ to $\hat{e}$, we have $e' < b_j < \hat{e}$, contradicting that 
$e$ and $\hat{e}$ are concurrent.
\end{proof}

We are now in a position to provide adequate definitions
for Algorithm~\ref{algo:unfolding2}.

\begin{definition}[Cut-off and cut-off candidate] 
\label{def:ccos}
Let $\Coni{\N}{e}$ denote the set of 
non cut-off interface events of $\N$ that are concurrent with $e$. An event $e$  
\begin{itemize}
\item is a cut-off if it is an $i$-event, and $\N$ contains an $i$-event $e'$ 
such that $\st{e}=\st{e'}$.
\item is a cut-off candidate of $\N$ if it is not an $i$-event, and $\N$ contains $e' \scause  e$ such that
$\st{e}=\st{e'}$, $ip(e')=ip(e)$, and $\Coni{\N}{e} \subseteq \Coni{\N}{e'}$. 
\item frees a cut-off candidate $e_c$ of $\N$ if $e_c$ is not a cut-off candidate of the branching process
obtained by adding $e$ to $\N$. 
\end{itemize}
\end{definition}

\begin{theorem}
\label{thm:correctnessacceptable}
Let $\prod = \A_1 \parallel \ldots \parallel \A_n$ with interface $\A_i$. The instance of Algorithm~\ref{algo:unfolding2} given by Definition \ref{def:ccos} terminates and returns a branching process $\N$ such that the folding $\S_i$
of $\N_i$ is a summary of $\prod$.
\end{theorem}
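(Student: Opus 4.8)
The proof has the familiar two-part shape: termination first, then correctness of the folding as a summary, with correctness split into the two trace inclusions. For termination, I would argue by contradiction as in Theorem~\ref{th:interfacefair}: if $\N$ is infinite, it contains an infinite causal chain $e_1 < e_2 < \cdots$, and by Lemma~\ref{lem:causal} a strong causal subchain $e_{i_1} \scause e_{i_2} \scause \cdots$. The set $\bar S$ of components participating infinitely often in this chain is fixed along it; since $\prod$ has finitely many global states and finitely many $i$-conditions, and since $ip(\cdot)$ takes finitely many values, some pair $e_{i_j} \scause e_{i_k}$ in the chain satisfies $\st{e_{i_j}}=\st{e_{i_k}}$ and $ip(e_{i_j})=ip(e_{i_k})$. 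If none of these events were an $i$-event, I would still need the extra clause $\Coni{\N}{e_{i_k}} \subseteq \Coni{\N}{e_{i_j}}$; this does not hold automatically, so the argument must instead show that \emph{in the limit} the process cannot keep adding successors of $e_{i_k}$ — i.e.\ that eventually a genuine cut-off candidate along the chain is never freed. The key point is that a cut-off candidate $e_c$ is freed only when a \emph{new} interface event concurrent with $e_c$ but not with its companion is added; by Lemma~\ref{lem:conc}, such freeing events, if added, would have to use conditions outside $[e_c]\setminus[e_{\text{companion}}]$, and one shows only finitely many freeings can occur before the candidate becomes permanent. Then, because successors of a permanent candidate are never added, the chain cannot be infinite after all — contradiction. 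Handling the case where the chain contains infinitely many $i$-events is easier: those are all causally related, give a chain with repeated global state, hence a (type-1) cut-off, contradicting infiniteness.

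For the inclusion $\Tr{\S_i} \subseteq \Tr{\prod}|_{\Sigma_i}$, I would reuse almost verbatim the argument from Theorem~\ref{th:interfacefair}: extend $\st{\cdot}$ to conditions and to $\equiv$-classes (well-defined since merged conditions share their global state, because both cut-off merges — type-1 interface cut-offs directly, and any candidate-induced structure — preserve $\st{\cdot}$), decompose a trace of $\S_i$ along the folding into steps $b'_j \by{\trace^i_j} b_j$ with $b_{j-1} \equiv b'_j$, lift each step to a computation of $\prod$ via the events of $[e_j]\setminus[e'_{j-1}]$, and concatenate. Nothing here is sensitive to the refined cut-off definition, so this direction is routine.

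The hard direction, and the main obstacle, is $\Tr{\prod}|_{\Sigma_i} \subseteq \Tr{\S_i}$. Following Theorem~\ref{th:interfacefair}, I would reduce to succinct histories with $i$-decomposition $h_1 h_2 h_3 \cdots$ and prove by induction on $k$ the strengthened statement: there exist $h^i_1,\dots,h^i_k$ with $H^i_k = h^i_1\cdots h^i_k$ a history of $\S_i$ realizing $\trace(h_1\cdots h_k)|_{\Sigma_i}$, \emph{and} a configuration $C_k$ of $\N$ free of cut-offs \emph{and of cut-off candidates} with $[\ma{C_k}_i]_\equiv$ the state reached by $H^i_k$. The inductive step adds the event-sequence $o_{k+1}$ realizing $h_{k+1}$ from $\ma{C_k}$. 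The delicate case is when $o_{k+1}$ would use an event that is currently a cut-off candidate $e_c$ with companion $e'_c \scause e_c$: here one must argue that, because $h_{k+1}$ ends with an $i$-event $e_m$ concurrent with (or causally after) $e_c$, either $e_m$ already witnesses that $e_c$ is not truly a candidate (so it was freed and $o_{k+1}$ is genuinely in $\N$), or one may ``shift'' the configuration from involving $[e_c]\setminus[e'_c]$ to using $[e'_c]$ instead — using $\st{e_c}=\st{e'_c}$, $ip(e_c)=ip(e'_c)$, and crucially Lemma~\ref{lem:conc} to guarantee that the interface event $e_m$ we care about does not live inside $[e_c]\setminus[e'_c]$ and hence survives the shift. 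This is exactly the scenario the $\Coni{\N}{\cdot}$-inclusion clause and the strong-causality strengthening were designed to make work (cf.\ the counterexamples of Figures~\ref{fig:concneeded} and~\ref{fig:strong}), and getting the bookkeeping of $C_{k+1}$ right — in particular ensuring it remains candidate-free, not merely cut-off-free — is where the real care is needed. Once the induction goes through, taking the limit over $k$ (using König's lemma / finiteness of $\S_i$ for the infinite-trace case) yields the desired trace of $\S_i$.
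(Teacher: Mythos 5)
Your overall architecture (termination, then the two inclusions, with the easy inclusion $\Tr{\S_i}\subseteq\Tr{\prod}|_{\Sigma_i}$ recycled verbatim from Theorem~\ref{th:interfacefair}) matches the paper's, and your instinct about where the difficulty lies is right. But at both of the genuinely subtle points your plan stops short of an argument, and in each case the missing ingredient is a specific device that the paper introduces precisely to close that hole.

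\textbf{Termination.} You correctly observe that repeating $\st{\cdot}$ and $ip(\cdot)$ along a strong causal subchain does not by itself give the clause $\Coni{\N}{e}\subseteq\Coni{\N}{e'}$, but your repair --- ``only finitely many freeings can occur before the candidate becomes permanent'' --- is asserted, not proved, and the appeal to Lemma~\ref{lem:conc} there does not obviously do the job. The argument you are missing is a pigeonhole you already have the pieces for: any two non-cut-off $i$-events of $\N$ have distinct global states (otherwise the later one would be a type-1 cut-off), so $\N$ contains only finitely many of them, hence the sets $\Coni{\N}{e}$ range over a finite collection. From an infinite causal chain one therefore extracts successive infinite subchains on which $ip(\cdot)$, then $\st{\cdot}$, then $\Coni{\N}{\cdot}$ are constant, and \emph{only then} applies Lemma~\ref{lem:causal} to obtain $e'\scause e$ within that subchain; $e$ is then a cut-off candidate, contradiction. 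The order matters: you apply Lemma~\ref{lem:causal} first and then pigeonhole, but $\scause$ is not transitive, so two non-consecutive events of a strong causal chain need not be in the relation $\scause$.

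\textbf{The inclusion $\Tr{\prod}|_{\Sigma_i}\subseteq\Tr{\S_i}$.} Your ``shift the configuration from $[e_c]\setminus[e'_c]$ to $[e'_c]$'' is the right mechanism but, as stated, it is an unbounded rewriting step: the shifted realization of $h_{k+1}$ may again run through a cut-off candidate, and you supply no decreasing measure. The paper's resolution is to fix, \emph{before} the induction, a \emph{strongly succinct} history: among all succinct histories passing through the same sequence of global states $\s_1\s_2\cdots$, one in which every segment $h_j$ has minimal length. The shift is then used only to derive a contradiction: if some non-final event $e$ of $o_{k+1}$ were a cut-off candidate with companion $e'$, reordering $o_{k+1}$ so that the events of $[e]\setminus[e']$ can be deleted (using $\st{e}=\st{e'}$ and, in the concurrent case, Lemma~\ref{lem:conc}) yields a strictly shorter realization of the same interface step, contradicting minimality. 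Hence the only cut-off in $o_{k+1}$ is the final $i$-event $e_m$, all of $o_{k+1}$ lies in $\N$, and the step closes as in Theorem~\ref{th:interfacefair} by jumping to $[e'_m]$ for the companion $e'_m$. Relatedly, your strengthened invariant (``$C_k$ contains no cut-offs and no candidates'') is not the one the case analysis needs: what is required is $C_k=[e_k]$ for an $i$-event $e_k$, since the three subcases ($C_k=\emptyset$; $e_k<e$; $e_k$ concurrent with $e$) are all phrased in terms of the position of the offending candidate $e$ relative to $e_k$, and it is the concurrent subcase that consumes Lemma~\ref{lem:conc} together with the $\Coni{\N}{e}\subseteq\Coni{\N}{e'}$ clause.
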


\begin{proof}
We first prove termination. Assume the algorithm does not terminate, i.e., it constructs an
infinite branching process $\N$.
Then there exists an infinite chain $e_1<e_2<...$ of causally related events in $\N$~\cite{Khomenko03}.
First remark that $C=\cup_{i=1}^\infty [e_i]$ cannot contain an infinite number of $i$-events:
if there is infinitely many $i$-event in $C$ one of them must be a cut-off (this is due to the finite number of global states in $\prod$) as all the $i$-events of $C$ are causally related there is a contradiction.
Hence, $C$ contains an infinite chain $\ocseq'$ of causally related events such that for any two events $e$ and $e'$ of $\ocseq'$ one has $M(e)_i=M(e')_i$.
From that, the finite number of possible global states in $\prod$ ensures that there exists an infinite subchain $\ocseq''$ of $\ocseq'$ such that for any two events $e$ and $e'$ of $\ocseq''$ one has $\st{e}=\st{e'}$. 
The finite number of possible global states in $\prod$ also ensures that in $\N$ there exists only a finite set of non-cut-off $i$-events.
So, there exists an infinite subchain $\ocseq'''$ of $\ocseq''$ such that for any two events $e$ and $e'$ of $\ocseq'''$ one has $\Coni{\N}{e}=\Coni{\N}{e'}$.
Finally, by Lemma~\ref{lem:causal} there exists two events $e$ and $e'$ of $\ocseq'''$ such that $e'\scause e$.
Then, $e$ is a cut-off candidate of $\N$, which is in contradiction with the infiniteness of $\ocseq'''$ and so with the existence of $e_1<e_2<\dots$.
The termination of Algorithm~\ref{algo:unfolding2} is thus proved.

\vspace{0.2cm}
Now we prove $\Tr{\S_i}=\Tr{\prod}|_{\Sigma_i}$. As in the proof of Theorem 
\ref{th:interfacefair}, we extend the mapping $\st{}$ to conditions, and to equivalence classes of
conditions of $\N_i$. 

$\Tr{\S_i} \subseteq \Tr{\prod}|_{\Sigma_i}$. The proof of this part is identical to
that of Theorem \ref{th:interfacefair}: since the folding $\S_i$ is completely determined by the cut-offs
that are $i$-events, and the definition of these cut-offs in Definition 2 and Definition 5
coincide, the same argument applies. 

$\Tr{\prod}|_{\Sigma_i} \subseteq \Tr{\S_i}$. The proof has the same structure 
as the proof of Theorem \ref{th:interfacefair}, but with a number of important changes. 

Let $\trace$ be a (finite or infinite) trace of $\prod$.
We prove that there exists a trace $\trace^i$ of $\S_i$ such that $\trace^i=\trace|_{\Sigma_i}$.
For that we prove that for every history $h$ of $\prod$ there exists a history $h^i$ of $\S_i$ 
such that $\trace(h^i)=\trace(h)|_{\Sigma_i}$.

As in Theorem \ref{th:interfacefair}, we use the notion of a succinct histories. However, we need
to strengthen it even more. Let $\nu=\s_1\s_2\s_3 \ldots$ be a (finite or infinite) sequence of
global states of $\prod$, and let $H(\nu)$ be the (possibly empty) set of 
succinct histories $h$ with $i$-decomposition
$h_1h_2h_3 \ldots$ such that $\s^0 \by{h_1} \s_1 \by{h_2} \s_2 \by{h_3}\cdots$. 
We say that a history $h_s \in H(\nu)$ with $i$-decomposition $h_{1s}h_{2s}h_{3s}\ldots$
is {\em strongly succinct} if for every history $h \in H(\nu)$ with $i$-decomposition
$h_1h_2h_3 \ldots$ we have $|h_{js}|\leq |h_j|$ for every $j$.
If $h_1 \ldots h_jh_{j+1}h_{j+2} \ldots$ is succinct, $\s_{j-1} \by{h'_j} \s_j$, and 
$|h_{j}|\leq |h'_j|$, then $h_1 \ldots h'_jh_{j+1}h_{j+2} \ldots$ is also succinct. Therefore,
if $H(\nu)$ is nonempty then it contains at least one strongly succinct history.

As in Theorem \ref{th:interfacefair}, we prove by induction a result implying the one we need.
For every (finite or infinite) strongly succinct history of $\prod$ with $i$-decomposition $h=h_1h_2h_3 \ldots$
there exists $h_1^ih_2^ih_3^i\dots$ such that for every $k$:
\begin{enumerate}
\item[(a)] $H_k^i=h_1^i\dots h_k^i$ is a history of $\S_i$ such that $\trace(H_k^i)=\trace(h_1\dots h_k)|_{\Sigma_i}$.
\item[(b)] There exists a configuration $C_k$ of $\N$ that contains no cut-offs and such that
$[\ma{C_k}_i]_\equiv$ is the state reached by $H_k^i$. 
\item[(c)] If $k\neq 0$, then there exists an $i$-event $e_k$ such that $C_k=[e_k]$. 
\end{enumerate}
\noindent (The first two claims are as Theorem \ref{th:interfacefair}, while the third one is new.)

\noindent{\bf Base case.}
If $k=0$, then $H_k^i$ is the empty history of $\S_i$ and $C_k = \emptyset$.

\noindent{\bf Inductive step.}
The initial part of the inductive step is identical to that of Theorem \ref{th:interfacefair}. 
Let $H_{k+1}$ be the prefix of $h_1h_2h_3\dots$ with $i$-decomposition $H_{k+1}=h_1 \ldots h_k h_{k+1}$ (it is a strongly succinct history).
Then $H_k = h_1 \ldots h_k$ is strongly succinct with $i$-decomposition $h_1 \ldots h_k$. 
By induction hypothesis $H_k^i$, some configuration $C_{k}$,
and, if $k\neq 0$, some event $e_{k}$ satisfy the conditions above. 

Let $o_{k+1}=e_1 \dots e_m$, where $m = |h_{k+1}|$,  be the only sequence of events 
whose labelling is $h_{k+1}$ and can occur in the order of the sequence from the marking $\ma{C_{k}}$ 
(this sequence always exists by the properties of $C_{k}$). 
Two cases are possible:

\vspace{0.1cm}
\noindent 1. $o_{k+1}$ contains no cut-off.\\
The proof of this case is as in Theorem \ref{th:interfacefair}. Part (c) follows because in Theorem \ref{th:interfacefair}
we choose $C_{k+1}$ as $C_{k} \cup \{e_1, \ldots, e_m\}$, which, since $e_j \leq e_m$ for every $j \in \{1..m\}$, implies $C_{k+1} = [e_m]$.

\vspace{0.1cm}
\noindent 2. $o_{k+1}$ contains some cut-off event.\\
In Theorem \ref{th:interfacefair} we used the following argument: since $e_m$ is the only $i$-event of
$o_{k+1}$, and cut-offs must be $i$-events, $e_m$ is a cut-off. This argument is no longer valid,
because in Definition \ref{def:ccos} non-$i$-events can also be cut-offs. So we prove 
that $e_m$ is a cut-off in a different way.

Let $e$ be a cut-off of $o_{k+1}$, and let $e'$ be its companion. 
We prove that, due to the minimality of $h_{k+1}$ in the definition of strong succinctness,
we have $e=e_m$.

Assume $e \neq e_m$. Since $e_m$ is the unique $i$-event of $o_{k+1}$,
$e$ is not an $i$-event. So, by Definition \ref{def:ccos}, it is an event that became a cut-off candidate
and was never freed. 

We consider first the case in which $C_k$ is the empty configuration (i.e. $k=0$).
In this case, consider a permutation $j_1j_2j_3$ of $o_{k+1}$ in which $j_1$ contains the events of $[e']$, $j_2$ contains the events of $[e]\setminus [e']$, and $j_3$ contains the rest of the events.
Since $\st{e}=\st{e'}$, $H_k\lambda(j_1j_3)=\lambda(j_1j_3)$ is also a history of $\prod$.
Since $|j_1j_3|<|o_{k+1}|$ this contradicts the minimality of $h_{k+1}$.

If $C_{k}$ is nonempty, then the $i$-event $e_{k}$ in part (c) of the induction hypothesis exists.
We consider the events $e$ and $e_{k}$. Since $e_{k}$ is an $i$-event but $e$ is not,
we have $e \neq e_{k}$. Since there is an occurrence sequence that contains
both $e$ and $e_{k}$, the events are not in conflict. Moreover, since in this occurrence sequence
$e$ occurs after $e_{k}$, we have that $e$ is not a causal predecessor of $e_{k}$ either.
So there are two remaining cases, for which we also have to show that they lead to a contradiction: 

\vspace{0.1cm}
\noindent (b1) $e_{k} < e$. Let $e'$ be the companion of $e$. By the definition of cut-off
candidate, we have $ip(e)=ip(e')$. Since $e_{k}$ is an $i$-event and $e_{k} < e$, we have 
$e_{k} < ip(e)$, and so $e_{k} < e' \scause e$. Consider the permutation $j_1j_2j_3$ of
$o_{k+1}$ in which $j_1$ contains the events of $[e'] \setminus [e_{k}]$,
$j_2$ contains the events of $[e] \setminus [e']$, and $j_3$ the rest
of the events. Since $\st{e}=\st{e'}$, $H_k\lambda(j_1j_3)$ is also a history of $\prod$. Since
$|j_1j_3| < |o_{k+1}|$, this contradicts the minimality of $h_{k+1}$.

\vspace{0.1cm}
\noindent (b2) $e_{k}$ and $e$ are concurrent. We handle this case by means 
of a sequence of claims. 
\begin{itemize}
\item[(i)]  Let $e'$ be the companion of $e$. The events $e'$ and $e_{k}$ are concurrent. \\
Follows from the fact that $e_{k}$ is an $i$-event
and $\Coni{\N}{e} \subseteq \Coni{\N}{e'}$ by the definition of cut-off candidate.
\item[(ii)] $([e] \setminus [e']) \cap [e_{k}]=\emptyset$.\\
Follows from Lemma \ref{lem:conc}, assigning $\hat{e} := e_{k}$.
\item[(iii)] $h_{k+1}$ is not minimal, contradicting the hypothesis.\\
By (ii), the sets $[e_{k}]$ and $[e] \setminus [e']$ are disjoint.
So every event of $[e] \setminus [e']$ belongs to $o_{k+1}$.
Consider the permutation $j_1j_2j_3$ of
$o_{k+1}$ in which $j_1$ contains the events that do not belong to $[e']$,
$j_2$ contains the events of $[e] \setminus [e']$, and $j_3$ the rest. 
Since $\st{e}=\st{e'}$, $H_k\lambda(j_1j_3)$ is also a history of $\A$, and since
$|j_1j_3| < |o_{k+1}|$ the sequence $h_{k+1}$ is not minimal.
\end{itemize}

Since all cases have been excluded, and so we have $e=e_m$, i.e., the $i$-event 
$e_m$ is the unique cut-off of $o_{k+1}$. Now we can reason as in Theorem \ref{th:interfacefair}. We have that
$o_{k+1}$ is a sequence of events from $\N$ whose last event is a cut-off, and 
the marking reached by $o_{k+1}$ is $\ma{e_m}$.
By the definition of folding, $\S_i$ has an execution $h_{i,k+1}$ from the state 
$[\ma{C_{k}}_i]_\equiv$ to the state $[\ma{e_m}_i]_\equiv$ such that 
$\trace(h_{i,k+1})=\trace(h_{k+1})|_{\Sigma_i}$. This allows to take $h^i_{k+1}=h_{i,k+1}$. 
We choose $C_{k+1}= [e_m']$, 
where $e_m'$ is the companion of $e_m$ and then, obviously $e_{k+1} = e_m'$. 
Since $e_m'$ is not a cut-off, $C_{k+1}$ contains no cut-offs. Moreover, since the marking 
reached by $o_{k+1}$ is $\ma{e_m}$, we have that $[\ma{C_{k+1}}_i]_\equiv$ is the state reached by 
$H_{k+1}^i$.
\end{proof}

\section{Implementation and Experiments}

\label{sec:exp}

As an illustration of the previous results, we report in this section on an implementation of 
Algorithm~\ref{algo:unfolding2}. All programs and data
used are publicly available.%
\footnote{\url{http://www.lsv.ens-cachan.fr/~schwoon/tools/mole/summaries.tar.gz}}

\subsection{Implementation}

We implemented Algorithm~\ref{algo:unfolding2} by modifying the
unfolding tool \textsc{Mole}~\cite{Mole}. The input of our tool is
the Petri net representation of a product $\prod$ in which every place
is annotated with the component it belongs to. Most of the infrastructure
of \textsc{Mole} could be re-used, in particular the existing implementation
contains efficient algorithms and data structures \cite{Esparza96}
for detecting new events
of the unfolding (the so-called possible extensions), computing the marking
$\st{e}$ of an event, etc.

The main work therefore consisted in determining cut-off candidates and
the ``freeing'' condition of Definition~\ref{def:ccos}.
For this, we introduce a \emph{blocking} relation between events: we write
$e'\blocks e$ if $e' \scause  e$, $\st{e}=\st{e'}$, $ip(e)=ip(e')$, and
$\Coni{\N}{e} \subseteq \Coni{\N}{e'}$, in other words $e$ is a cut-off
candidate because of $e'$; let $\rblocks{e}:=\{\,e'\in\N \mid e'\blocks e\,\}$.
Notice that $\rblocks{e}\subseteq\past{e}$.
Therefore, an over-approximation of this set can be computed when
$e$ is discovered as a possible extension, by checking all its causal
predecessors. When $\N$ is expanded, $\rblocks{e}$ can only decrease
because adding an event may lead to a violation of the condition
$\Coni{\N}{e}\subseteq\Coni{\N}{e'}$.

The blocking relation requires two principal, interacting additions
to the unfolding algorithm:
\begin{itemize}
\item[(i)] a traversal of $\past{e}$ collecting information about
  the `cut' $\ma{e}$;
\item[(ii)] computing the concurrency relation between events.
\end{itemize}

For (i), we modify the way \textsc{Mole} determines $\st{e}$:
it performs a linear traversal of $\past{e}$, marking all conditions
consumed and produced by the events of $\past{e}$, thus obtaining $\ma{e}$.
We extend this linear traversal with Algorithm~\ref{alg:walk}, which computes
$\mathit{cut}=\ma{e}$, allowing to directly determine the conditions
$\st{e}=\st{e'}$
and $ip(e)=ip(e')$. Moreover, every condition $b$ becomes annotated with
a set $\ind{b}:=\{\,j\mid b\le \ma{e}_j\,\}$.
This, together with $\ma{e}$ and $\ma{e'}$, allows
to efficiently determine whether $e'\scause e$ holds. Notice that if the
number of components in $\prod$ is ``small'', the operations on $\ind{b}$
can be implemented with bitsets. Thus, the additional overhead of
Algorithm~\ref{alg:walk} with respect to the previous algorithm can be
kept small.

\begin{algorithm}[htbp]
\begin{algorithmic}
\State let $\N$ be the current branching process and $e$ its latest extension
\State set $\worklist:=\past{e}$ and $\mathit{cut}:=\emptyset$
\State for all conditions $b$, let $b$ unmarked and $\ind{b}:=\emptyset$
\While{$\worklist\ne\emptyset$}
\State remove a $<$-maximal element $e$ from $\worklist$
\State add all unmarked conditions $b\in\out{e}$ to $\mathit{cut}$ and set $\ind{b}:=\{i(b)\}$
\State $I:=\bigcup_{b\in\out{e}}\ind{b}$;
\State mark all conditions $b\in\inp{e}$ and set $\ind{b}:=I$
\EndWhile
\State add all unmarked initial conditions $b$ to $\mathit{cut}$ and set $\ind{b}:=\{i(b)\}$
\end{algorithmic}
\caption{Traversal of $\past{e}$ for efficiently determining $\rblocks{e}$,
where $i(b)$ denotes the component to which condition $b$ belongs.}
\label{alg:walk}
\end{algorithm}

Concerning (ii), we are interested in determining the sets $\Coni{N}{e}$
for all events $e$. We make use of the facts that:
\begin{itemize}
\item \textsc{Mole} already determines, for every condition $b$, a set
  of other conditions $\parb{b}$ that are concurrent with $b$. When
  the $\N$ is extended with event $e$, it computes the set
  $I:=\bigcup_{b\in\inp{e}}\parb{b}$ and sets
  $\parb{b'}=I\cup\out{e}\setminus\{b'\}$ for every $b'\in\out{e}$.
\item Two events $e,e'$ of $\N$ are concurrent iff their inputs $\inp{e}$
  and $\inp{e'}$ are disjoint and pairwise concurrent. Thus, when $e$ is
  added, this relation can be checked by marking the events in $I$ and
  checking whether $I$ includes $\inp{e'}$. Thus, $\Coni{N}{e}$ can be
  obtained with small overhead w.r.t.\ the existing implementation.
\item At the same time, we can easily determine whether the addition
  of an event $e$ should lead to the removal of some event $e'$ from
  $\rblocks{e''}$; if this causes $\rblocks{e''}$ to become empty,
  $e''$ is freed.
\end{itemize}

\subsection{Experimental results}

We tested our implementation on well-known benchmarks
used widely in the unfolding literature,
see for example~\cite{Cor96,Esparza96,Khomenko03}.
The input is the set of components $\A_1, \ldots, \A_n$,
which are converted into an equivalent Petri net. 
All reported times are on a machine with a 2.8 MHz Intel CPU and
4~GB of memory running Linux.
For each example, we
also report the number of events
(including cut-offs) in the prefix (Events),
the number of states
in the resulting summary $\S_i$ ($|\S_i|$), the size of a minimal deterministic
automaton for a summary (Min), and
the number of reachable
markings (Markings, taken from \cite{RoemerPhd} where
available, and computed combinatorially for \textsc{DpSyn}).

The experiments are summarized in Table~\ref{tab:expfull}. We used
the following families of examples~\cite{Cor96}:
the \textsc{CyclicC} and \textsc{CyclicS} families are a model of
Milner's cyclic scheduler with $n$ consumers and $n$ schedulers; in
one case we compute the folding for a consumer, in the other for a scheduler.
The \textsc{Dac} family represents a divide-and-conquer computation.
\textsc{Ring} is a mutual-exclusion protocol on a token-ring. The tasks
are not entirely symmetric, we report the results for the first.
Finally, \textsc{Dp}, \textsc{Dpsyn}, and \textsc{Dpd} are variants of
Dining Philosophers. In \textsc{Dp},
philosophers take and release forks one by one, whereas in \textsc{Dpsyn}
they take and release both at once. In \textsc{Dpd}, 
deadlocks are prevented by passing a dictionary.

\begin{table}[ht]
\begin{center}
\setlength\tabcolsep{5pt}
\begin{tabular}{lrrrrr}
\midrule
Test case & Time/s & Events & $|\S_i|$ & Min. & Markings \\
\toprule
\textsc{CyclicC}(6)  &  0.04  &   426 & 5 & 2 &    639 \\
\textsc{CyclicC}(9)  &  0.17  &  3347 & 5 & 2 &   7423 \\
\textsc{CyclicC}(12) &  4.04  & 26652 & 5 & 2 &  74264 \\
\midrule
\textsc{CyclicS}(6)  & 0.05 &   303 & 11 & 5 &    639 \\
\textsc{CyclicS}(9)  & 0.12 &  2328 & 11 & 5 &   7423 \\
\textsc{CyclicS}(12) & 2.38 & 18464 & 11 & 5 &  74264 \\
\midrule
\textsc{Dac}(9)  & 0.02 &  86 & 4 & 4 &   1790 \\
\textsc{Dac}(12) & 0.03 & 134 & 4 & 4 &  14334 \\
\textsc{Dac}(15) & 0.03 & 191 & 4 & 4 & 114686 \\
\midrule
\textsc{Dp}(6)  & 0.06 &    935 & 20 & 4 &   729 \\
\textsc{Dp}(8)  & 0.22 &   5121 & 28 & 4 &  6555 \\
\textsc{Dp}(10) & 2.23 &  31031 & 36 & 4 & 48897 \\
\midrule
\textsc{Dpd}(4) &  0.10 &   2373 & 114 & 6 &    601 \\
\textsc{Dpd}(5) &  0.71 &  23789 & 332 & 6 &   3489 \\
\textsc{Dpd}(6) & 17.68 & 245013 & 903 & 6 &  19861 \\
\midrule
\textsc{Dpsyn}(10) & 0.02 &  176 & 2 & 2 &     123 \\
\textsc{Dpsyn}(20) & 0.07 &  701 & 2 & 2 &   15127 \\
\textsc{Dpsyn}(30) & 0.26 & 1576 & 2 & 2 & 1860498 \\
\midrule
\textsc{Ring}(5) & 0.07 &   511 &  53 & 10 &   1290 \\
\textsc{Ring}(7) & 0.12 &  3139 & 101 & 10 &  17000 \\
\textsc{Ring}(9) & 0.93 & 16799 & 165 & 10 & 211528 \\
\end{tabular}
\caption{More experimental results}
\label{tab:expfull}
\end{center}
\end{table}

In all cases except one (\textsc{Dpd}) our algorithm needs clearly 
fewer events than there are reachable markings;
in some families (\textsc{Dac},
\textsc{Dpsyn}, \textsc{Ring}) there are far fewer events.
A comparison of \textsc{Dp} and \textsc{Dpsyn} is instructive. In \textsc{Dp},
neighbours can concurrently pick and drop forks. Intuitively, this leads
to fewer cases in which the condition $\Coni{\N}{e}\subseteq\Coni{\N}{e'}$
for cut-off candidates is satisfied. On the other hand, in \textsc{Dpsyn} both forks are
picked and dropped synchronously, and so no event in $\A_i$ is concurrent to any
event in the neighbouring components, making the unfolding procedure much
more efficient.

\section{Extensions: Divergences and  Weights}

We conclude the paper by showing that our algorithm can be extended to handle more complex semantics than traces.
Indeed, the divergences of the system can be captured by the summaries, as well as the minimal weights of the finite traces from $\Tr{\prod}|_{\Sigma_i}$ when $\A_1\dots\A_n$ are weighted systems.

\subsection{Divergences}
We first extend our algorithm so that the summary
also contains information about {\em divergences}.
Intuitively, a divergence is a finite trace of the interface after which
the system can ``remain silent'' forever.

\begin{definition}
Let $\A_1, \ldots, \A_n$ be LTSs with interface $\A_i$. A 
{\em divergence} of $\A_i$ is a finite trace
$\sigma \in \Tr{\A_i}$ such that $\sigma = \tau_{|{\Sigma_i}}$ for some
infinite trace $\tau \in \Tr{\prod}$.
A {\em divergence-summary} is a pair $(\S_i, D)$, where $\S_i$ is a summary 
and $D$ is a subset of the states of $\S_i$ such that $\sigma \in \Tr{\S_i}$
is a divergence of $\A_i$ if{}f some realization of $\sigma$ in $\S_i$ 
leads to a state of $D$. 
\end{definition}

We define the set of divergent conditions of the output of Algorithm~\ref{algo:unfolding2}, and show that it is a correct choice for the set $D$.

\begin{definition}
Let $\N$ be the output of Algorithm~\ref{algo:unfolding2}. A condition $s$ of $\N_i$ is {\em divergent}
if after termination of the algorithm there is $e \in {\it coc}$ with 
companion $e'$ such that $s$ is concurrent to both $e$ and $e'$. We denote
the set of divergent conditions by ${\it DC}$.
\end{definition}

\begin{theorem}
\label{thm:divergences}
A finite trace $\sigma \in \Tr{\S_i}$ is a divergence of $\A_i$ 
if{}f there is a divergent condition $s$ of $\N_i$ such that some realization
of $\sigma$ leads to $[s]_\equiv$. Therefore, $(\S_i, [{\it DC}]_\equiv)$ is a divergence-summary.
\end{theorem}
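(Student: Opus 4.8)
The plan is to prove the biconditional directly, establishing each direction by relating divergent traces of $\A_i$ to the cut-off candidates that were \emph{never freed} in the final branching process $\N$. The key observation is that a cut-off candidate $e$ with companion $e'$ encodes a repeatable silent loop: since $\st{e}=\st{e'}$, $ip(e)=ip(e')$, and $e' \scause e$, the occurrence sequence executing the events of $[e]\setminus[e']$ leads from the marking $\ma{e'}$ back to a marking with the same global state and the same $i$-condition, using no $i$-events. This loop can be iterated forever, producing an infinite trace of $\prod$ whose projection onto $\Sigma_i$ is finite, i.e., a divergence. So I would first isolate, as an auxiliary lemma, this ``loop'' property of an unfreed cut-off candidate.

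\textbf{The ``if'' direction.} Suppose $s$ is a divergent condition, witnessed by $e \in \textit{coc}$ with companion $e'$, both concurrent with $s$, and suppose some realization of $\sigma$ in $\S_i$ leads to $[s]_\equiv$. I would lift $\sigma$ to a history of $\prod$ reaching a marking containing (a condition equivalent to) $s$; more carefully, using the machinery of strongly succinct histories from the proof of Theorem~\ref{thm:correctnessacceptable}, I would obtain a configuration $C$ of $\N$ with no cut-offs, containing an $i$-event whose $i$-condition is $s$ (or $\equiv s$), such that $\trace(C)|_{\Sigma_i} = \sigma$. Because $e$ and $e'$ are concurrent with $s$ and $e'$ is not a cut-off, the configuration $[e'] \cup C$ is a legitimate configuration of $\N$, and from the marking it reaches one can fire the events of $[e]\setminus[e']$ to form the silent loop described above. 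Iterating yields an infinite trace $\tau$ of $\prod$ with $\tau|_{\Sigma_i} = \sigma$, so $\sigma$ is a divergence. The concurrency of $e,e'$ with $s$ is exactly what lets the loop be attached ``after'' reading $\sigma$; I expect Lemma~\ref{lem:conc} to be the right tool for checking the relevant disjointness of causal pasts.

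\textbf{The ``only if'' direction.} Suppose $\sigma \in \Tr{\S_i}$ is a divergence, witnessed by an infinite trace $\tau$ of $\prod$ with $\tau|_{\Sigma_i} = \sigma$. Take an infinite history $h$ of $\prod$ realizing $\tau$; it contains only finitely many $i$-events, so after some point $h$ produces an infinite causal chain $e_1 < e_2 < \cdots$ of non-$i$-events (using the argument from Khomenko, as in the termination proofs). By Lemma~\ref{lem:causal} this chain has a strong-causal subchain, and since $\prod$ has finitely many global states and $\N$ finitely many non-cut-off $i$-events, I can extract two events $f' \scause f$ in this chain with $\st{f}=\st{f'}$, $ip(f)=ip(f')$ and $\Coni{\N}{f}\subseteq\Coni{\N}{f'}$ — but this is exactly the condition for $f$ to be a cut-off candidate, and since the algorithm has terminated and $\N$ does not contain successors of unfreed candidates, $f$ (or rather the first such event on the relevant branch in $\N$) must be an unfreed cut-off candidate $e \in \textit{coc}$ with companion $e'$. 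The remaining work is to show that the $i$-condition reached by $\sigma$ is concurrent with $e$ and $e'$: this should follow because the last $i$-event before the silent tail of $h$ reaches a cut that is ``to the side of'' the loop — formally, that $i$-event lies in $\Coni{\N}{e'}$ and hence in $\Coni{\N}{e}$ by the candidate condition, so $s$ (being its output $i$-condition, up to $\equiv$) is concurrent with both. Then $s$ is divergent by definition, and some realization of $\sigma$ leads to $[s]_\equiv$ by Theorem~\ref{thm:correctnessacceptable}.

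\textbf{Main obstacle.} The hard part is the bookkeeping in the ``only if'' direction: matching the abstract witnessing infinite trace $\tau$ against the \emph{finite} object $\N$ and locating a concrete unfreed cut-off candidate whose companion-pair is concurrent with the right $i$-condition. One must be careful that the strong-causal pair extracted from the infinite chain actually corresponds to events \emph{present in $\N$} (the chain lives in the full unfolding, but the algorithm stopped adding successors of candidates), and that the concurrency bookkeeping $\Coni{\N}{\cdot}$ is read off at the right moment (after termination). I would handle this by first replacing $h$ with a succinct history and arguing, as in Theorem~\ref{thm:correctnessacceptable} case~2, that the relevant prefix sits inside $\N$ up to the point where a candidate blocks further extension; the freeing mechanism then guarantees that if that candidate were freed, an even longer chain of the same kind would recur, contradicting termination. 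Everything else — the loop iteration, the trace-projection computations — is routine given the earlier lemmas.
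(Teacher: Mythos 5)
Your proposal follows the same route as the paper: both directions hinge on the fact that an unfreed cut-off candidate $e$ with companion $e'$ encodes a silent loop (fire $[e]\setminus[e']$ from a marking with global state $\st{e'}$ and iterate, which is legitimate because $\st{e}=\st{e'}$ and no $i$-event is involved), and the divergent-condition definition records exactly which $i$-conditions can be "followed by" such a loop. Your handling of the forward direction (divergence $\Rightarrow$ candidate) via a maximal silent prefix inside $\N$ being blocked by a candidate is also the paper's actual argument, and your worry about the chain living in the full unfolding rather than in $\N$ is resolved exactly as you suggest.

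There is, however, a concrete error in the step you yourself flag as the crux. You write that the last $i$-event before the silent tail "lies in $\Coni{\N}{e'}$ and hence in $\Coni{\N}{e}$ by the candidate condition" --- but the candidate condition is $\Coni{\N}{e}\subseteq\Coni{\N}{e'}$, so it transfers concurrency from $e$ to $e'$, not the other way round. The correct order is: first show that $e$ is concurrent with $s$ (because $e$ occurs in a silent occurrence sequence from $\ma{e_i'}$, where $e_i'$ is the non-cut-off $i$-event outputting $s$), and only then deduce that $e'$ is concurrent with $s$. Moreover this deduction via the $\Coni{}$-inclusion only works when $e$ is actually concurrent with $e_i'$; the other possibility --- $e$ is a causal \emph{successor} of $e_i'$ through a non-interface branch, with $ip(e)=\ma{e_i'}_i=s$ --- is not covered by your argument at all, since then $e_i'\notin\Coni{\N}{e}$. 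In that case one must instead use $ip(e)=ip(e')$ to conclude $e_i'<e'$ (the argument of case (b1) in Theorem~\ref{thm:correctnessacceptable}) and hence that $e'$ is concurrent with $s$. The same dichotomy (concurrent versus successor-through-a-side-branch) also needs to be made explicit in your "if" direction to justify that $[e']\cup C$ is a configuration and that Lemma~\ref{lem:conc} applies only in the concurrent case; there your construction does go through in both cases, but the case split should be stated.
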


\begin{proof}
$(\Rightarrow)$
Assume that $\sigma$ is a divergence of $\A_i$.
By the definition of a divergence, there exists $\tau\in\Tr{\prod}$ such that $\tau|_{\Sigma_i}=\sigma$ and $\tau$ is infinite.
So there exists a strongly succinct history $h$ of $\A$ such that $\trace(h)=\tau$. Denote by $e_i$ the last i-event of $h$.
The proof of Theorem~\ref{thm:correctnessacceptable} guarantees the existance
of an i-event $e_i'$ in $\N$ which is not a cut-off and satisfies the following two properties: $\st{e_i}=\st{e_i'}$, and there exists a realisation of $\sigma$ leading to $[s]_\equiv$, where $s=M(e_i)_i$.
As $\tau$ is infinite, the unfolding $\U$ of $\A$ contains an infinite occurrence sequence starting at $M(e_i)$ and containing no i-event. Since 
$\st{e_i}=\st{e_i'}$, another infinite sequence with the same labelling and without $i$-events can occur from $M(e_i')$ in $\U$. By construction of $\N$, and since $e_i'$ is not a cut-off, a non-empty prefix of this second occurrence sequence appears in $\N$, and contains at least one cut-off candidate $e$.
So $e$ appears in some occurrence sequence without i-events starting at $M(e_i')$.
It follows that $e$ is either (1) concurrent with $e_i'$, or 
(2) a successor of $e_i'$ such that $ip(e)=M(e_i')_i$. 
Moreover, since $e$ is not an i-event, it is concurrent with $s=M(e_i')_i$.
It remains to show that the companion $e'$ of $e$ is also concurrent with $s$.
If (1) holds, i.e., if $e$ is concurrent with $e_i'$, then $e'$ is concurrent with $e_i'$ (and so with $s$) as well, because, by the definition of a cut-off candidate, we have $\Coni{\N}{e}\subseteq\Coni{\N}{e'}$.
If (2) holds, i.e., if $e>e_i'$, then we have $e'>e_i'$ for the same reason as in the case (b1) in the proof of Theorem~\ref{thm:correctnessacceptable}), and so $e'$ and $s$ are concurrent.

$(\Leftarrow)$
Consider a divergent condition $s$ of $\N_i$.
By the definition of a divergent condition there exist a cut-off candidate $e$ with companion $e'$ such that neither $e$ nor $e'$ are i-events, 
and both $e$ and $e'$ are concurrent with $s$.
Let $e_i$ be the i-event such that $M(e_i)_i=s$.
As $e$ is concurrent with $s$, it is either concurrent with $e_i$, or a successor of $e_i$ such that $ip(e)=M(e_i)_i$. We consider these two cases separately.

(1) $e$ is a successor of $e_i$ such that $ip(e)=M(e_i)_i$.  Then $e'$ is a successor of $e_i$ for the same reason as in case (b1) of Theorem~\ref{thm:correctnessacceptable}. So we have $[e_i]\subseteq[e']\subseteq[e]$.
Let $j_1$ be any occurrence sequence starting from $M(e_i)$ and containing exactly the events in $[e'] \setminus [e_i]$ (so $j_1$ contains no i-events).
Let $j_2$ be any occurrence sequence starting at $M(e')$ and containing exactly the events in $[e]\setminus [e']$ (so $j_2$ contains no i-events either).
As $\st{e}=\st{e'}$, there exists an occurrence sequence $j_2^1$ 
in $\U$ starting at $M(e)$ and such that $\trace(j_2^1)=\trace(j_2)$; moreover the last event $e^1$ of $j_2^1$ satisfies $\st{e^1}=\st{e}$.
So we can iteratively construct occurrence sequences $j_2^k$ for every $k>1$, each of them starting at $M(e^{k-1})$, satisfying $\trace(j_2^k)=\trace(j_2)$, 
and ending with an event $e^k$ satisfying $\st{e^k}=\st{e}$.
So the infinite occurrence sequence $j_1j_2j_2^1j_2^2\dots$ can occur in $\U$ from $M(e_i)$.

(2) $e$ is concurrent with $e_i$.  Then $e'$ is also concurrent with $e_i$, because the definition of a cut-off candidate requires $\Coni{\N}{e}\subseteq\Coni{\N}{e'}$. By Lemma~\ref{lem:conc} we have $[e_i]\cap([e]\setminus[e'])=\emptyset$. Let $j_1$ be any occurrence sequence starting from $M(e_i)$ and containing exactly the events in $[e'] \setminus [e_i]$ (so $j_1$ contains no i-events).

Given two arbitrary concurrent events $e_1, e_2$, let $M(e_1,e_2)$ be the unique marking reached by any occurrence sequence that fires exactly the events of $[e_1]\cup[e_2]$. Let $j_2$ be any occurrence sequence starting from $M(e_i,e')$ and containing exactly the events in $[e']\setminus [e]$ (so $j_2$ contains no i-events).
As $\st{e}=\st{e'}$ and $[e_i]\cap([e]\setminus[e'])=\emptyset$, there exists  an occurrence sequence $j_2^1$ in $\U$ starting at $M(e_i,e)$ and such that $\trace(j_2^1)=\trace(j_2)$; moreover the last event $e^1$ of $j_2^1$ satisfies $\st{e^1}=\st{e}$. So for every $k > 1$ we can iteratively
construct sequences $j_2^k$ starting from $M(e_i,e^{k-1})$ such that $\trace(j_2^k)=\trace(j_2)$ and ending with an event $e^k$ satisfying $\st{e^k}=\st{e}$. It follows that the infinite occurrence sequence $j_1j_2j_2^1j_2^2\dots$ can occur in $\U$ from $M(e_i)$.

So in both cases $\prod$ has an infinite execution $h'$ starting at $\st{e_i}$ and such that $\trace(h')|_{\Sigma_i}$ is empty.
Moreover, if some realization of $\sigma$ leads to $[s]_\equiv=M(e_i)_i$, the proof of Theorem~\ref{thm:correctnessacceptable} guarantees the existence of a history $h$ of $\prod$ reaching state $\st{e_i}$ and satisfying $\trace(h)|_{\Sigma_i}=\sigma$.
Taking $\tau=\trace(hh')$ concludes the proof.
\end{proof}

\subsection{Weights}
We now consider weighted systems, e.g parallel compositions of weighted LTS.
Formally, a weighted LTS $\A^w=(\A,c)$ consists of an LTS $\A=(\Sigma,S,T,\lambda,s^0)$ and a weight function $c:T\rightarrow \R$ associating a weight to each transition.
A \emph{weighted trace} of $\A^w$ is a pair $(\sigma,w)$ where $\sigma=a_1\dots a_k$ is a finite trace of $\A$ and $w$ is the minimal weight among the paths realizing $\sigma$, i.e:
$$w=\min_{\substack{
s_0\dots s_k\in S^{k+1}, s_0=s^0,\\
t_i=(s_{i-1},s_i)\in T, \lambda(t_i)=a_i}}\sum_{j=1}^kc(t_j).$$
We denote by $\Tr{\A^w}$ the set of all the weighted traces of $\A^w$.
The parallel composition $\prod^w=(\prod,\mathbf{c})=\A^w_1\mathbin{||_w}\cdots\mathbin{||_w}\A^w_n$ of the LTS $\A^w_1,\dots,\A^w_n$ is such that $\prod=\A_1||\dots||\A_n$ and the weight of a global transition $\t=(t_1,\dots,t_n)$ is: 
$$\mathbf{c}(\t)=\sum_{t_i\neq\star}c_i(t_i).$$

Similarly a \emph{weighted labelled Petri net} is a tuple $\N^w=(\N,c)$ where $\N=(\Sigma,P,T,F,\lambda,M_0)$ is a labelled Petri net and $c:T\rightarrow\R$ associates weights to transitions.
A weighted trace in $\N^w$ is a pair $(\sigma,w)$ with $\sigma$ a finite trace of $\N$ and $w$ the minimal weight of an occurrence sequence corresponding to $\sigma$, where the weight of an occurrence sequence is the sum of the weights of its transitions.
By $\Tr{\N^w}$ we denote the set of all the weighted traces of $\N^w$.

The branching processes of $\A_1^w||_w\dots||_w\A_n^w$ are defined as weighted labelled Petri nets like in the non-weighted case, where
each event is implicitly labelled by an action (as before) and a cost.
Given a finite set of weighted traces $W$ we define its restriction to alphabet $\Sigma$ as 
$$W|_{\Sigma}=\{\,(\sigma,w)\mathrel{:}\exists(\sigma',w')\in W, \sigma=\sigma'|_{\Sigma} \wedge w=\min_{\substack{(\sigma',w')\in W\\ \sigma'|_{\Sigma}=\sigma}} w'\,\}.$$
As in the non-weighted case we are interested in solving the following summary problem:
\begin{definition}[Weighted summary problem]
Given $\A^w_1,\dots,\A^w_n$, weighted LTSs with interface $\A^w_i$, compute a weighted LTS $\S_i^w$ satisfying $\Tr{\S_i^w}=\Tr{\prod^w}|_{\Sigma_i}$, where $\prod^w=\A_1^w||_w\dots||_w\A_n^w$.
\end{definition}

This section aims at showing that the approach to the summary problem proposed in the non-weighted case still works in the weighted one.
In other words, $\S^w_i$ can be obtained by computing a finite branching process $\N^w$ of $\prod^w$ (using Definition~\ref{def:ccos} of cut-off and cut-off candidates and Algorithm~\ref{algo:unfolding2}) and then taking the interface projection $\N^w_i$ of $\N^w$ on $\A_i^w$ and folding it.
The notion of interface projection needs to be slightly modified to take weights into account.
The conditions, events, and arcs of $\N^w_i$ are defined exactly as above, and the weight of an event $e$ of $\N^w_i$ is $c_i(e)=c([e])-c([e'])$ if the predecessor $e'$ of $e$ in $\N^w_i$ exists and $c_i(e)=c([e])$ else, where $c$ is the weight function of $\N^w$ and $c([e])=\sum_{e_k\in[e]}c(e_k)$, where $[e]$ is the past of $e$ in the weighted branching process $\N^w$.

\begin{theorem}
\label{thm:costs}
Let $\prod^w=\A_1^w||_w\dots||_w\A_n^w$ with interface $\A_i^w$.
The instance of Algorithm~\ref{algo:unfolding2} given by Definition~\ref{def:ccos} terminates and returns a weighted branching process $\N^w$ such that the folding $S_i^w$ of $\N_i^w$ is a weighted summary of $\prod^w$. 
\end{theorem}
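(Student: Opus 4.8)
The first observation is that Algorithm~\ref{algo:unfolding2} together with Definition~\ref{def:ccos} refers only to the \emph{structural} data $\st{e}$, $ip(e)$, the strong causal relation, and $\Coni{\N}{e}$, and never to weights. Hence the branching process returned for $\prod^w$ has exactly the same underlying net --- and induces exactly the same equivalence $\equiv$ on $i$-conditions --- as the process $\N$ returned by Theorem~\ref{thm:correctnessacceptable} for $\prod$. Termination is therefore inherited verbatim, and the underlying (unweighted) LTS of the folding $\S_i^w$ is precisely the summary $\S_i$; in particular $\{\sigma\mid(\sigma,w)\in\Tr{\S_i^w}\}=\Tr{\prod}|_{\Sigma_i}=\{\sigma\mid(\sigma,w)\in\Tr{\prod^w}|_{\Sigma_i}\}$. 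So it remains to show that for every \emph{finite} $\sigma\in\Tr{\prod}|_{\Sigma_i}$ the minimal weight of a realization of $\sigma$ in $\S_i^w$ equals $w_\sigma:=\min\{\mathbf{c}(o)\mid o\text{ a history of }\prod,\ \trace(o)|_{\Sigma_i}=\sigma\}$, which, unfolding the definitions of $\Tr{\cdot^w}$ and of $\cdot|_\Sigma$, is exactly the weight attached to $\sigma$ in $\Tr{\prod^w}|_{\Sigma_i}$.

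Next I would record the identity that makes the definition of $c_i$ work. If $e$ is an $i$-event of $\N$ with $\N_i$-predecessor $\mathit{pred}(e)=e'$, then $e'<e$, the set $\past{e}\setminus\past{e'}$ contains exactly one $i$-event (namely $e$, since $e'$ is the immediate $i$-predecessor of $e$ on the $i$-chain below $e$), and the occurrence sequence of $\N$ firing $\past{e}\setminus\past{e'}$ runs from $\ma{e'}$ to $\ma{e}$; read in $\prod$ it is an occurrence sequence from global state $\st{e'}$ to $\st{e}$ with $i$-projection $\lambda(e)$ and weight $c(\past{e})-c(\past{e'})=c_i(e)$ (and similarly with $\past{e'}$, $\st{e'}$ replaced by $\emptyset$, $\s^0$ when $e$ has no $\N_i$-predecessor). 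Consequently, any path $f_1\cdots f_k$ of $\S_i^w$ realizing $\sigma=a_1\cdots a_k$ can be ``concretized'' slot by slot: the slot occurrence sequences concatenate into a single history $o$ of $\prod$, because $\equiv$ identifies only conditions with the same $\st{\cdot}$ and hence the end global state of slot $j$ equals the start global state of slot $j{+}1$; moreover $\trace(o)|_{\Sigma_i}=\sigma$ and $\mathbf{c}(o)=\sum_j c_i(f_j)$ is the weight of the path. Thus every weight realized in $\S_i^w$ is realized in $\prod^w$ for the same $\sigma$, which gives $\min_{\S_i^w}\ge w_\sigma$.

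For the reverse inequality I would rerun the construction of the $\Tr{\prod}|_{\Sigma_i}\subseteq\Tr{\S_i}$ part of the proof of Theorem~\ref{thm:correctnessacceptable}, now starting from a \emph{minimum-weight} history $o$ with $\trace(o)|_{\Sigma_i}=\sigma$ (so $\mathbf{c}(o)=w_\sigma$) and tracking weights along the way. That construction yields a path $f_1\cdots f_k$ of $\S_i$ realizing $\sigma$ together with cut-off-free configurations $C_k=\past{e_k}$ whose global states are the intermediate global states of $o$ (between its $i$-actions). The only new claim to verify is that the $\S_i^w$-weight of the $(k{+}1)$-st slot equals $\mathbf{c}(o_{k+1})$, the $\prod^w$-weight of the corresponding segment of $o$. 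When $o_{k+1}$ uses no cut-off this is immediate: the path literally uses the events of $o_{k+1}$, and $\mathit{pred}(f_{k+1})=e_k$ (or $f_1$ has no predecessor when $k=0$). When $o_{k+1}$ ends in the interface cut-off $e_m=f_{k+1}$, the point is that $o_{k+1}$ contains no $i$-event before $e_m$, so the input $i$-condition of $e_m$ is $\ma{C_k}_i$ and hence $\mathit{pred}(f_{k+1})=e_k$; since the marking reached by $o_{k+1}$ from $\ma{C_k}=\ma{e_k}$ is $\ma{e_m}$, the set of events of $o_{k+1}$ is exactly $\past{e_m}\setminus\past{e_k}$, whence $c_i(e_m)=c(\past{e_m})-c(\past{e_k})=\mathbf{c}(o_{k+1})$. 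Summing over $k$, the constructed path has weight exactly $\mathbf{c}(o)=w_\sigma$, so $\min_{\S_i^w}\le w_\sigma$; together with the previous paragraph this yields equality and hence $\Tr{\S_i^w}=\Tr{\prod^w}|_{\Sigma_i}$.

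The genuinely new work is thus confined to the weight bookkeeping in the cut-off case of the inductive step, and I expect this to be the main obstacle. It rests squarely on part~(c) of the induction in the proof of Theorem~\ref{thm:correctnessacceptable} --- the existence of an $i$-event $e_k$ with $C_k=\past{e_k}$: without it one cannot conclude that $\past{e_m}$ splits as $\past{e_k}$ together with the events of $o_{k+1}$, and the identity $c_i(e_m)=\mathbf{c}(o_{k+1})$ would fail, so that the reuse of Theorem~\ref{thm:correctnessacceptable} has to be done with care rather than as a black box. A secondary, more bureaucratic point is to make sure the minima defining $\Tr{\prod^w}|_{\Sigma_i}$ are well defined (they range over an a priori infinite family of histories), which is handled exactly as the paper implicitly does for the unweighted trace sets.
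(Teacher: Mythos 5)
Your overall route coincides with the paper's: termination and the underlying trace equality are inherited from Theorem~\ref{thm:correctnessacceptable} because the algorithm never inspects weights; one inequality is obtained by concretizing a path of $\S_i^w$ slot by slot (your identity that $c_i(e)=c(\past{e})-c(\past{e'})$ is the $\prod^w$-weight of the segment firing $\past{e}\setminus\past{e'}$ is exactly what the paper uses); the other is obtained by re-running the inductive construction of Theorem~\ref{thm:correctnessacceptable} and checking that each slot of the constructed path has weight $\mathbf{c}(o_{k+1})$. Your use of part~(c) of that induction, i.e.\ $C_k=\past{e_k}$, to derive $c_i(e_m)=c(\past{e_m})-c(\past{e_k})=\mathbf{c}(o_{k+1})$ is correct and is the right way to make the bookkeeping precise.

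There is, however, a genuine gap in the second direction, and you have explicitly argued it away by asserting that the only new claim to verify is the slot-weight identity. The inductive construction of Theorem~\ref{thm:correctnessacceptable} is not available for an arbitrary minimum-weight history: it applies to \emph{strongly succinct} histories, and the crucial step --- showing that the only cut-off occurring in $o_{k+1}$ is the final $i$-event $e_m$ --- derives its contradiction in all three sub-cases from $|j_1j_3|<|o_{k+1}|$, i.e.\ from per-slot \emph{length} minimality. A minimum-weight history need not be strongly succinct, and conversely a length-minimal slot need not be weight-minimal, so you can neither feed your history $o$ into the construction as is, nor first shorten it without possibly increasing its weight. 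Weight minimality alone does not rescue the contradiction either: since weights lie in $\R$, the removed segment $j_2$ may have weight $0$, in which case $\mathbf{c}(j_1j_3)=\mathbf{c}(o_{k+1})$, no contradiction with weight minimality arises, a non-interface cut-off $e\neq e_m$ can survive inside $o_{k+1}$, and the construction stalls. The paper's fix --- which it identifies as the main adaptation --- is to redefine strong succinctness lexicographically: one requires $\mathbf{c}(h_{js})<\mathbf{c}(h_j)$, or $\mathbf{c}(h_{js})=\mathbf{c}(h_j)$ and $|h_{js}|\leq|h_j|$, for every competing slot $h_j$. Then $j_1j_3$ is strictly smaller in this order (its weight weakly decreases and its length strictly decreases), the case analysis goes through, and the strongly succinct history one starts from still has weight at most $w_\sigma$ (passing to it only removes events, whose weights are non-negative), so the constructed path still witnesses the inequality you want. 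Your proof needs this refinement; the slot-weight identity you concentrate on is the easier half of the adaptation.
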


\begin{proof}
The termination is granted by Theorem~\ref{thm:correctnessacceptable} as well as the fact that the weighted trace $(\trace,w)$ belongs to $\Tr{\S^w_i}$ if and only if, for some $w'$, the weighted trace $(\trace,w')$ belongs to $\Tr{\prod^w}_{|\Sigma_i}$.
It remains to show that for any $\trace$ such that $(\trace,w)\in\Tr{\S^w_i}$ and $(\trace,w')\in\Tr{\prod^w}_{|\Sigma_i}$ one has $w=w'$.
In the following we denote by $c_i$ the costs functions of $\S^w_i$ and $\N^w_i$, and by $\mathbf{c}$ the cost function of $\prod^w$.
Similarly we denote by $\lambda_i$ the labelling function of $\N^w_i$ and by $\lambda$ the labelling function of $\prod^w$.

$w'\leq w$.
This part of the proof is very close to the proof of the first inclusion of Theorem~\ref{th:interfacefair}.
Let $(\trace^i,w)$ be a finite weighted trace of $\S_i^w$. 
Then $[b^0]_\equiv \by{\trace^i} [b]_\equiv$ in $\S_i^w$ with $c_i(\trace^i)=w$,
where $[b^0]_\equiv$ is the initial state of $\S_i$, and $[b]_\equiv$ is some state of $\S_i$.
By the definition of folding, there exist $\tau^i_{1}, \ldots, \tau^i_{k}$ occurrence sequences of $\N_i$ and $(b_1',b_1), \ldots, (b_k',b_k)$ pairs of conditions of $\N_i$ such that
(1) $\trace^i = \lambda_i(\tau^i_{1}) \lambda_i(\tau^i_{2}) \ldots \lambda_i\tau^i_{k}$; (2) $b^0=b'_1$ and $b_k=b$; (3) $b_{j}' \by{\tau_{j}^i} b_j$ in $\N_i$ for every $j=1,\ldots, k$; 
(4) $b_{j-1} \equiv  b_j'$ for every $j\in\{1..k\}$;
and (5) $c_i(\tau^i_1) + \dots + c_i(\tau^i_k)=c_i(\trace^i)=w$.

By (3) and the definition of projection, we have
$\st{b_{j}'} \by{\tau_{j}} \st{b_j}$ in $\prod$ for some execution $\tau_j$ such that $\lambda_i(\tau^i_{j})=\lambda(\tau_j)|_{\Sigma_i}$ and $\mathbf{c}(\tau_j)= c_i(\tau_j^i)$: indeed, if $e$ and $e'$ are the input events of $b_j$ and $b'_{j}$, then $\st{b_j}$ is reachable from $\st{b'_{j-1}}$ by means of any execution $\tau_j$ corresponding to executing the events of $[e] \setminus [e']$, and any
such $\tau_j$ satisfies $\lambda_i(\tau^i_{j})=\lambda(\trace_j)|_{\Sigma_i}$ and $\mathbf{c}(\tau_j)= c_i(\tau_j^i)$. Moreover, by (4) we have $\st{b_{j-1}}=\st{b_j'}$. So we get 
$$\st{b_1'} \by{\tau_{1}} \st{b_2'} \by{\tau_{2}} \cdots \by{\tau_{k-1}}\st{b_{k}'} \by{\tau_{k}} \st{b_k}$$
\noindent By (1) and (2) we have $\st{b^0}\by{\tau_1 \ldots \tau_k} \st{b}$ in $\prod$, so
$\trace^i=\trace|_{\Sigma_i}\in\Tr{\prod}|_{\Sigma_i}$ with $\trace=\lambda(\tau_1)\ldots\lambda(\tau_k)$, and by (5) and the definition of a weighted trace $w'\leq \mathbf{c}(\trace)\leq \mathbf{c}(\tau_1)+\dots+\mathbf{c}(\tau_k)= c_i(\trace_1^i)+\dots+c_i(\trace_k^i)=w$.

$w\leq w'$.
This part of the proof is almost exactly the same as the proof of the second inclusion of Theorem~\ref{thm:correctnessacceptable} (considering finite traces only).
We describe here the few differences between these two proofs.
The main one is the definition of strongly succinct histories: instead of requiring $|h_{js}|\leq|h_j|$ we require $\mathbf{c}(h_{js})<\mathbf{c}(h_j)$, or $\mathbf{c}(h_{js})=\mathbf{c}(h_j)$ and $|h_{js}|\leq|h_j|$.
Then, as we are interested in weights, claim (a) of the induction hypothesis has the supplementary requirement that $c_i(H_k^i)=\mathbf{c}(h_1\dots h_k)$.
The base case is then the same, just remarking that the cost of the empty history is $0$ in both $\S_i^w$ and $\prod^w$. 
For the inductive step two things have to be done: (1) ensuring that when $o_{k+1}$ contains a cut-off it is necessarily $e_m$ and (2) ensuring the new part of claim (a) about weights.
For (1) just remark that in all cases $j_1j_3$ is such that $\mathbf{c}(j_1j_3)\leq \mathbf{c}(j_1j_2j_3)$ and $|j_1j_3|<|j_1j_2j_3|$ so the same arguments as previously can be used with the new definition of a strongly succinct history.
For (2) notice that when $e_m$ is a cut-off i-event, in the unfolding of $\prod^w$ the events that can occur from $M(e_m)$ and from $M(e_m')$ do not only have the same labelling: they in fact correspond to the exact same transitions of $\prod^w$ and so they also have the same weights.

Reusing this proof we have shown that the weighted trace $(\trace,w')$ of $\A^w$ is such that there exists a history $h^i$ of $\S_i^w$ such that $\trace_{|\Sigma_i}=\trace(h^i)$ and $c_i(h^i)=\mathbf{c}(\trace)=w'$.
So, by the definition of a weighted trace it comes directly that $w\leq c_i(h^i) = w'$.
\end{proof}

\section{Conclusions}

We have presented the first unfolding-based solution to the 
summarization problem for trace semantics. The final algorithm is 
simple, but its correctness proof is surprisingly subtle. 
We have shown that it can be extended (with minor modifications) to handle divergences and weighted systems.

The algorithm can also be extended to other semantics,
including information about failures or completed traces;
this material is not contained in the paper because, while laborious,
it does not require any new conceptual ideas. 

We conjecture that the condition $e' \scause e$ in the definition 
of cut-off candidate 
can be replaced by $e' < e$, if at the same time the algorithm is 
required to add events 
in a suitable order. Similar ideas have proved successful in the past 
(see e.g. \cite{Esparza96,Khomenko03}).

\bibliographystyle{plain}
\bibliography{bibliotech}

\end{document}